\newtheorem{thm} {Theorem}
\newtheorem{lem} {Lemma}
\newtheorem{assumption} {Assumption}
\begin{document}
\title{Distributed Decision-Making over Adaptive Networks}

\author{Sheng-Yuan Tu, \IEEEmembership{Student Member,~IEEE}
        and Ali H. Sayed, \IEEEmembership{Fellow,~IEEE}
\thanks{Copyright (c) 2013 IEEE. Personal use of this material is permitted.
However, permission to use this material for any other purposes must
be obtained from the IEEE by sending a request to
pubs-permissions@ieee.org.

This work was supported in part by NSF grant CCF-1011918. An earlier
conference version of this work appeared in \cite{Tu12c}. The
authors are with the Department of Electrical Engineering,
University of California, Los Angeles (e-mail: shinetu@ee.ucla.edu;
sayed@ee.ucla.edu).}}

\maketitle

\begin{abstract}
In distributed processing, agents generally collect data generated
by the \emph{same} underlying unknown model (represented by a vector
of parameters) and then solve an estimation or inference task
cooperatively. In this paper, we consider the situation in which the
data observed by the agents may have risen from two \emph{different}
models. Agents do not know beforehand which model accounts for their
data and the data of their neighbors. The objective for the network
is for all agents to reach agreement on which model to track and to
estimate this model cooperatively. In these situations, where agents
are subject to data from unknown different sources, conventional
distributed estimation strategies would lead to biased estimates
relative to any of the underlying models. We first show how to
modify existing strategies to guarantee unbiasedness. We then
develop a classification scheme for the agents to identify the
models that generated the data, and propose a procedure by which the
entire network can be made to converge towards the same model
through a collaborative decision-making process. The resulting
algorithm is applied to model fish foraging behavior in the presence
of two food sources.
\end{abstract}

\begin{keywords}
Adaptive networks, diffusion adaptation, classification,
decision-making, biological networks.
\end{keywords}

\section{Introduction}
Self-organization is a remarkable property of biological networks
\cite{Camazine03,Couzin09}, where various forms of complex behavior
are evident and result from decentralized interactions among agents
with limited capabilities. One example of sophisticated behavior is
the group decision-making process by animals \cite{Sumpter09}. For
example, it is common for biological networks to encounter
situations where agents need to decide between multiple options,
such as fish deciding between following one food source or another
\cite{Couzin11}, and bees or ants deciding between moving towards a
new hive or another \cite{Britton02,Pratt02}. Although multiple
options may be available, the agents are still able to reach
agreement in a decentralized manner and move towards a common
destination (e.g., \cite{Beekman06}).

In previous works, we proposed and studied several diffusion
strategies
\cite{Lopes08,Cattivelli08,Cattivelli10,Chen12,Sayed13a,Sayed13b}
that allow agents to adapt and learn through a process of in-network
collaboration and learning. References \cite{Sayed13a,Sayed13b}
provide overviews of diffusion techniques and their application to
distributed adaptation, learning, and optimization over networks.
Examples of further applications and studies appear, e.g., in
\cite{Li09,Chouvardas11,Lorenzo13b,Xia11b,Takahashi10b,Chouvardas12}.
Diffusion networks consist of a collection of adaptive agents that
are able to respond to excitations in real-time. Compared with the
class of consensus strategies
\cite{Tsitsiklis86,Nedic09,Schizas09,Mateos09,Dimakis10,Kar11,Kar12},
diffusion networks have been shown to remain stable irrespective of
the network topology, while consensus networks can become unstable
even when each agent is individually stable \cite{Tu12a}. Diffusion
strategies have also been shown to lead to improved convergence rate
and superior mean-square-error performance \cite{Tu12a,Sayed13b}.
For these reasons, we focus in the remainder of this paper on the
use of diffusion strategies for decentralized decision-making.

Motivated by the behavior of biological networks, we study
distributed decision-making over networks where agents are subject
to data arising from two different models. The agents do not know
beforehand which model accounts for their data and the data of their
neighbors. The objective of the network is for all agents to reach
agreement on one model and to estimate and track this \emph{common}
model cooperatively. The task of reaching agreement over a network
of agents subjected to different models is more challenging than
earlier works on inference under a single data model. The difficulty
is due to various reasons. First, traditional (consensus and
diffusion) strategies will converge to a biased solution (see Eq.
(\ref{eq91})). We therefore need a mechanism to compensate for the
bias. Second, each agent now needs to distinguish between which
model each of its neighbors is collecting data from (this is called
the \emph{observed} model) and which model the network is evolving
to (this is called the \emph{desired} model). In other words, in
addition to the learning and adaptation process for tracking, the
agents should be equipped with a classification scheme to
distinguish between the observed and desired models. The agents also
need to be endowed with a decision process to agree among themselves
on a common (desired) model to track. Moreover, the classification
scheme and the decision-making process will need to be implemented
in a fully distributed manner and in real-time, alongside the
adaptation process.

There have been useful prior works in the literature on formations
over multi-agent networks
\cite{Jadbabaie03,Olfati04,Fax04,Olfati06,Yildiz10,Khan10,Forero10}
and opinion formation over social networks
\cite{Castellano09,Acemoglu10,Jadbabaie12} using, for example,
consensus strategies. These earlier works are mainly interested in
having the agents reach an average consensus state, whereas in our
problem formulation agents will need to reach one of the models and
not the average of both models. Another difference between this work
and the earlier efforts is our focus on combining real-time
classification, decision-making, and adaptation into a single
integrated framework running at each agent. To do so, we need to
show how the distributed strategy should be modified to remove the
bias that would arise due to the multiplicity of models
--- without this step, the combined decision-making and adaptation scheme
will not perform as required. In addition, in our formulation, the
agents need to continuously adjust their decisions and their
estimates because the models are allowed to change over time. In
this way, reaching a static consensus is not the objective of the
network. Instead, the agents need to continuously adjust and track
in a dynamic environment where decisions and estimates evolve with
time as necessary. Diffusion strategies endow networks with such
tracking abilities --- see, e.g., Sec. VII of \cite{Zhao12}, where
it is shown how well these strategies track as a function of the
level of non-stationarity in the underlying models.


\section{Diffusion Strategy}
Consider a collection of $N$ agents (or nodes) distributed over a
geographic region. The set of neighbors (i.e. neighborhood) of node
$k$ is denoted by $\mathcal{N}_k$; the number of nodes in
$\mathcal{N}_k$ is denoted by $n_k$. At every time instant, $i$,
each node $k$ is able to observe realizations $\{d_k(i),u_{k,i}\}$
of a scalar random process $\boldsymbol{d}_{k}(i)$ and a $1\times M$
\emph{row} random regressor $\boldsymbol{u}_{k,i}$ with a
positive-definite covariance matrix,
$R_{u,k}=\mathbb{E}\boldsymbol{u}_{k,i}^T\boldsymbol{u}_{k,i}>0$.
The regressors $\{\boldsymbol{u}_{k,i}\}$ are assumed to be
temporally white and spatially independent, i.e.,
$\mathbb{E}\boldsymbol{u}^T_{k,i}\boldsymbol{u}_{l,j}=
R_{u,k}\delta_{kl}\delta_{ij}$ in terms of the Kronecker delta
function. Note that we are denoting random quantities by boldface
letters and their realizations or deterministic quantities by normal
letters. The data $\{\boldsymbol{d}_{k}(i),\boldsymbol{u}_{k,i}\}$
collected at node $k$ are assumed to originate from one of two
unknown \emph{column} vectors $\{w^\circ_0,w^\circ_1\}$ of size $M$
in the following manner. We denote the generic observed model by
$z_k^\circ\in \{w_0^\circ, w_1^\circ\}$; node $k$ does not know
beforehand the observed model. The data at node $k$ are related to
its observed model $z^\circ_k$ via a linear regression model of the
form:
\begin{equation} \label{eq2}
    \boldsymbol{d}_k(i) = \boldsymbol{u}_{k,i}z^\circ_k+\boldsymbol{v}_k(i)
\end{equation}
where $\boldsymbol{v}_k(i)$ is measurement noise with variance
$\sigma^2_{v,k}$ and assumed to be temporally white and spatially
independent. The noise $\boldsymbol{v}_k(i)$ is assumed to be
independent of $\boldsymbol{u}_{l,j}$ for all $\{k,l,i,j\}$. All
random processes are zero mean.

The objective of the network is to have \emph{all} agents converge
to an estimate for \emph{one} of the models. For example, if the
models happen to represent the location of food sources
\cite{Tu11a,Sayed13a}, then this agreement will make the agents move
towards one particular food source in lieu of the other source. More
specifically, let $\boldsymbol{w}_{k,i}$ denote the estimator for
$z^\circ_k$ at node $k$ at time $i$. The network would like to reach
an ageement on a common $q$, such that
\begin{equation}\label{eq23}
    \boldsymbol{w}_{k,i} \rightarrow w^\circ_q
    \text{ for $q=0$ or $q=1$ and for all $k$ as $i\rightarrow\infty$}
\end{equation}
where convergence is in some desirable sense (such as the
mean-square-error sense).

Several \emph{adaptive} diffusion strategies for distributed
estimation under a common model scenario were proposed and studied
in \cite{Lopes08,Cattivelli10,Cattivelli08,Chen12,Sayed13a},
following the developments in
\cite{Lopes06,Sayed07,Lopes07,Cattivelli07,Cattivelli08b} ---
overviews of these results appear in \cite{Sayed13a,Sayed13b}. One
such scheme is the adaptive-then-combine (ATC) diffusion strategy
\cite{Cattivelli08b,Cattivelli10}. It operates as follows. We select
an $N\times N$ matrix $A$ with nonnegative entries $\{a_{l,k}\}$
satisfying:
\begin{equation} \label{eq50}
    \mathds{1}^T_NA=\mathds{1}^T_N \quad\text{and}\quad
    a_{l,k}=0 \text{ if $l\notin\mathcal{N}_k$}
\end{equation}
where $\mathds{1}_N$ is the vector of size $N$ with all entries
equal to one. The entry $a_{l,k}$ denotes the weight that node $k$
assigns to data arriving from node $l$ (see Fig. \ref{Fig_1}). The
ATC diffusion strategy updates $\boldsymbol{w}_{k,i-1}$ to
$\boldsymbol{w}_{k,i}$ as follows:
\begin{align}\label{eq49}
    \boldsymbol{\psi}_{k,i} &= \boldsymbol{w}_{k,i-1}+
    \mu_k\cdot \boldsymbol{u}_{k,i}^T[\boldsymbol{d}_k(i)-
    \boldsymbol{u}_{k,i}\boldsymbol{w}_{k,i-1}]\\
    \boldsymbol{w}_{k,i} &= \sum_{l\in\mathcal{N}_k}a_{l,k}\boldsymbol{\psi}_{l,i} \label{eq1}
\end{align}
where $\mu_k$ is the \emph{constant} positive step-size used by node
$k$. The first step (\ref{eq49}) involves local adaptation, where
node $k$ uses its own data
$\{\boldsymbol{d}_k(i),\boldsymbol{u}_{k,i}\}$ to update the weight
estimate at node $k$ from $\boldsymbol{w}_{k,i-1}$ to an
intermediate value $\boldsymbol{\psi}_{k,i}$. The second step
(\ref{eq1}) is a combination step where the intermediate estimates
$\{\boldsymbol{\psi}_{l,i}\}$ from the neighborhood of node $k$ are
combined through the weights $\{a_{l,k}\}$ to obtain the updated
weight estimate $\boldsymbol{w}_{k,i}$. Such diffusion strategies
have found applications in several domains including distributed
optimization, adaptation, learning, and the modeling of biological
networks --- see, e.g., \cite{Sayed13a,Sayed13b,Tu11a} and the
references therein. Diffusion strategies were also used in some
recent works \cite{Ram10,Bianchi11,Srivastava11,Stankovic11} albeit
with diminishing step-sizes ($\mu_k(i)\rightarrow 0$) to enforce
consensus among nodes. However, decaying step-sizes disable
adaptation once they approach zero. Constant step-sizes are used in
(\ref{eq49})-(\ref{eq1}) to enable continuous adaptation and
learning, which is critical for the application under study in this
work.

\begin{figure}
\centering
\includegraphics[width=14em]{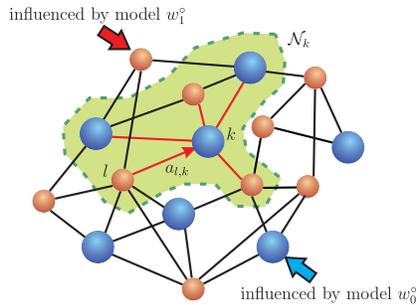}
\caption{A connected network where data collected by the agents are
influenced by one of two models. The weight $a_{l,k}$ scales the
data transmitted from node $l$ to node $k$ over the edge linking
them.} \label{Fig_1}
\end{figure}

When the data arriving at the nodes could have risen from one model
or another, the distributed strategy (\ref{eq49})-(\ref{eq1}) will
not be able to achieve agreement as in (\ref{eq23}) and the
resulting weight estimates will tend towards a biased value. We
first explain how this degradation arises and subsequently explain
how it can be remedied.
\begin{assumption}[Strongly connected network]\label{ass3}
The network topology is assumed to be strongly connected so that the
corresponding combination matrix $A$ is primitive, i.e., there
exists an integer power $j>0$ such that $[A^j]_{l,k} > 0$ for all
$l$ and $k$.
\end{assumption}
\noindent As explained in \cite{Sayed13a}, Assumption \ref{ass3}
amounts to requiring the network to be connected (where a path with
nonzero weights exists between any two nodes), and for at least one
node to have a non-trivial self-loop (i.e., $a_{k,k}>0$ for at least
one $k$). We conclude from the Perron-Frobenius Theorem
\cite{Horn85,Berman94} that every primitive left-stochastic matrix
$A$ has a unique eigenvalue at one while all other eigenvalues are
strictly less than one in magnitude. Moreover, if we denote the
right-eigenvector that is associated with the eigenvalue at one by
$c$ and normalize its entries to add up to one then it holds that:
\begin{equation}\label{eq9}
    Ac=c,\quad \mathds{1}_N^T c=1,\quad
    \text{and}\quad0<c_k<1.
\end{equation}

Let us assume for the time being that the agents in the network have
agreed on converging towards one of the models (but they do not know
beforehand which model it will be). We denote the desired model
generically by $w_q^\circ$. In Section \ref{sec2}, we explain how
this agreement process can be attained. Here we explain that even
when agreement is present, the diffusion strategy
(\ref{eq49})-(\ref{eq1}) leads to biased estimates unless it is
modified in a proper way. To see this, we introduce the following
error vectors for any node $k$:
\begin{equation}\label{eq29}
    \tilde{\boldsymbol{w}}_{k,i}\triangleq
    w^\circ_q-\boldsymbol{w}_{k,i}\;\;\text{ and }\;\;
    \tilde{z}^\circ_{k}\triangleq w^\circ_q-z^\circ_k.
\end{equation}
Then, using model (\ref{eq2}), we obtain that the update vector in
(\ref{eq49}) becomes
\begin{align}
    \boldsymbol{h}_{k,i}&\triangleq
    \boldsymbol{u}^T_{k,i}[\boldsymbol{d}_k(i)-
    \boldsymbol{u}_{k,i}\boldsymbol{w}_{k,i-1}]\notag \\
    &=\boldsymbol{u}^T_{k,i}\boldsymbol{u}_{k,i}\tilde{\boldsymbol{w}}_{k,i-1}
    -\boldsymbol{u}^T_{k,i}\boldsymbol{u}_{k,i}\tilde{z}^\circ_{k}
    +\boldsymbol{u}^T_{k,i}\boldsymbol{v}_k(i).\label{eq20}
\end{align}
We collect all error vectors across the network into block vectors:
$\tilde{\boldsymbol{w}}_{i}\triangleq
\text{col}\left\{\tilde{\boldsymbol{w}}_{k,i}\right\}$ and
$\tilde{z}^\circ\triangleq
\text{col}\left\{\tilde{z}^\circ_{k}\right\}$. We also collect the
step-sizes into a block diagonal matrix and introduce the extended
combination matrix:
\begin{equation}\label{eq56}
    \mathcal{M} = \text{diag}\{\mu_kI_M\}\quad \text{and}\quad
    \mathcal{A}\triangleq A\otimes I_M
\end{equation}
where $I_M$ denotes the identity matrix of size $M$. In
(\ref{eq56}), the notation $\text{diag}\{\cdot\}$ constructs a
diagonal matrix from its arguments and the symbol $\otimes$ denotes
the Kronecker product of two matrices. Moreover, the notation
$\text{col}\{\cdot\}$ denotes the vector that is obtained by
stacking its arguments on top of each other. Then, starting from
(\ref{eq49})-(\ref{eq1}) and using relation (\ref{eq20}), we can
verify that the global error vector $\tilde{\boldsymbol{w}}_{i}$ of
the network evolves over time according to the recursion:
\begin{equation}\label{eq3}
    \tilde{\boldsymbol{w}}_{i}=
    \boldsymbol{\mathcal{B}}_i \tilde{\boldsymbol{w}}_{i-1}+
    \boldsymbol{y}_i
\end{equation}
where the matrix $\boldsymbol{\mathcal{B}}_i$ and the vector
$\boldsymbol{y}_i$ are defined in Table \ref{tab1} with
$\boldsymbol{\mathcal{R}}_i\triangleq\text{diag}\{
\boldsymbol{u}^T_{k,i}\boldsymbol{u}_{k,i}\}$ and
$\boldsymbol{s}_i\triangleq\text{col}\{
\boldsymbol{u}^T_{k,i}\boldsymbol{v}_{k,i}\}$. Note that the matrix
$\boldsymbol{\mathcal{B}}_i$ is a random matrix due to the
randomness of the regressors $\{\boldsymbol{u}_{k,i}\}$. Since the
regressors are temporally white and spatially independent, then
$\boldsymbol{\mathcal{B}}_{i}$ is independent of
$\tilde{\boldsymbol{w}}_{i-1}$. In addition, since
$\boldsymbol{u}_{k,i}$ is independent of $\boldsymbol{v}_{k}(i)$,
the vector $\boldsymbol{s}_i$ in $\boldsymbol{y}_i$ has zero mean.
Then, from (\ref{eq3}), the mean of $\tilde{\boldsymbol{w}}_{i}$
evolves over time according to the recursion:
\begin{equation}\label{eq5}
    \mathbb{E}\tilde{\boldsymbol{w}}_{i}=\mathcal{B}\cdot
    \mathbb{E}\tilde{\boldsymbol{w}}_{i-1}+y
\end{equation}
where $\mathcal{B}\triangleq \mathbb{E}\boldsymbol{\mathcal{B}}_{i}$
and $y \triangleq \mathbb{E}\boldsymbol{y}_{i}$ are defined in Table
\ref{tab1} with $\mathcal{R}\triangleq
\mathbb{E}\boldsymbol{\mathcal{R}}_{i} =\text{diag}\{R_{u,k}\}$. It
can be easily verified that a necessary and sufficient condition to
ensure the convergence of $\mathbb{E}\tilde{\boldsymbol{w}}_{i}$ in
(\ref{eq5}) to zero is
\begin{equation}\label{eq22}
    \rho(\mathcal{B})<1\quad\text{and}\quad
    y=0
\end{equation}
where $\rho(\cdot)$ denotes the spectral radius of its argument. It
was verified in \cite{Sayed13a,Tu12a} that a sufficient condition to
ensure $\rho({\cal B})<1$ is to select the site-sizes $\{\mu_k\}$
such that
\begin{equation}\label{eq26}
    0 < \mu_k < \frac{2}{\rho(R_{u,k})}
\end{equation}
for all $k$. This conclusion is independent of $A$. However, for the
second condition in (\ref{eq22}), we note that in general, the
vector $y=\mathcal{A}^T\mathcal{M}\mathcal{R}\tilde{z}^\circ$ cannot
be zero no matter how the nodes select the combination matrix $A$.
When this happens, the weight estimate will be biased. Let us
consider the example with three nodes in Fig. \ref{Fig_2} where node
1 observes data from model $w^\circ_0$, while nodes 2 and 3 observe
data from another model $w^\circ_1$. The matrix $A$ in this case is
shown in Fig. \ref{Fig_2} with the parameters $\{a,b,c,d\}$ lying in
the interval $[0,1]$ and $b+c\leq1$. We assume that the step-sizes
and regression covariance matrices are the same, i.e., $\mu_k=\mu$
and $R_{u,k}=R_u$ for all $k$. If the desired model of the network
is $w^\circ_q=w^\circ_0$, then the third block of $y$ becomes $\mu
R_u(w^\circ_0-w^\circ_1)$, which can never become zero no matter
what the parameters $\{a,b,c,d\}$ are. More generally, using results
on the limiting behavior of the estimation errors
$\{\tilde{\boldsymbol{w}}_{k,i}\}$ from \cite{Chen13b}, we can
characterize the limiting point of the diffusion strategy
(\ref{eq49})-(\ref{eq1}) as follows.

\begin{table}
\centering \caption{{\rm The error vector evolves according to the
recursion} $\tilde{\boldsymbol{w}}_{i}=\boldsymbol{\mathcal{B}}_i
    \tilde{\boldsymbol{w}}_{i-1}+\boldsymbol{y}_i$,
{\rm where the variables} $\{\boldsymbol{\mathcal{B}}_i,
\boldsymbol{y}_i\}$ {\rm and their respective means are listed below
for the conventional and modified diffusion strategies.} }
\renewcommand{\arraystretch}{1.5}\label{tab1}
\begin{tabular}{|c|c|c|}
\hline & \textbf{Diffusion} (\ref{eq49})-(\ref{eq1}) &
\textbf{Modified diffusion} (\ref{eq6})-(\ref{eq15})  \\
\hline\hline $\boldsymbol{\mathcal{B}}_i$ &
    $\mathcal{A}^{T}(I_{NM}-\mathcal{M}\boldsymbol{\mathcal{R}}_i)$ &
    $\mathcal{A}_1^{T}(I_{NM}-\mathcal{M}\boldsymbol{\mathcal{R}}_i)
    +\mathcal{A}_2^{T}$ \\
\hline $\mathcal{B}\triangleq\mathbb{E}\boldsymbol{\mathcal{B}}_{i}$
&   $\mathcal{A}^T(I_{NM}-\mathcal{M}\mathcal{R})$ &
    $\mathcal{A}_1^T(I_{NM}-\mathcal{M}\mathcal{R})+\mathcal{A}_2^T$ \\
\hline $\boldsymbol{y}_i$ &
    $\mathcal{A}^T\mathcal{M}\boldsymbol{\mathcal{R}}_i\tilde{z}^\circ-
    \mathcal{A}^T\mathcal{M}\boldsymbol{s}_i$ &
    $\mathcal{A}_1^T\mathcal{M}\boldsymbol{\mathcal{R}}_i\tilde{z}^\circ-
    \mathcal{A}_1^T\mathcal{M}\boldsymbol{s}_i$ \\
\hline $y\triangleq \mathbb{E}\boldsymbol{y}_i$ &
    $\mathcal{A}^T\mathcal{M}\mathcal{R}\tilde{z}^\circ$ &
    $\mathcal{A}_1^T\mathcal{M}\mathcal{R}\tilde{z}^\circ$ \\
\hline
\end{tabular}
\end{table}

\begin{figure}
\centering
\includegraphics[width=20em]{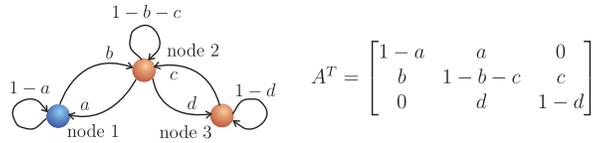}
\caption{A three-node network. Node 1 observes data from $w^\circ_0$
while nodes 2 and 3 observe data from $w^\circ_1$.} \label{Fig_2}
\end{figure}

\begin{lem}\label{lem3}
For the diffusion strategy (\ref{eq49})-(\ref{eq1}) with $\mu_k=\mu$
and $R_{u,k}=R_u$ for all $k$ and for sufficiently small step-sizes,
all weight estimators $\{\boldsymbol{w}_{k,i}\}$ converge to a limit
point $w^\circ$ in the mean-square sense, i.e.,
$\mathbb{E}\|w^\circ-\boldsymbol{w}_{k,i}\|^2$ is bounded and of the
order of $\mu$, where $w^\circ$ is given by
\begin{equation}\label{eq91}
    w^\circ = \sum_{k=1}^Nc_kz^\circ_k
\end{equation}
where the vector $c$ is defined in (\ref{eq9}).
\end{lem}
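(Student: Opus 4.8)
The plan is to analyze the limiting behavior of the mean recursion (\ref{eq5}) and then add the mean-square fluctuation on top of it. With the uniform choices $\mu_k=\mu$ and $R_{u,k}=R_u$ we have $\mathcal{M}=\mu I_{NM}$ and $\mathcal{R}=I_N\otimes R_u$, so the quantities in Table \ref{tab1} collapse to the Kronecker forms $\mathcal{B}=A^T\otimes(I_M-\mu R_u)$ and $y=\mu(A^T\otimes R_u)\tilde z^\circ$. By the stability bound (\ref{eq26}), for sufficiently small $\mu$ we have $\rho(\mathcal{B})<1$, so $I-\mathcal{B}$ is invertible and the mean error converges to the unique fixed point $\bar{\tilde w}\triangleq\lim_i\mathbb{E}\tilde{\boldsymbol{w}}_i=(I-\mathcal{B})^{-1}y$; equivalently the mean weights converge to $\bar w_k=w^\circ_q-\bar{\tilde w}_k$. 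It therefore suffices to show that each $\bar w_k$ lies within $O(\mu)$ of the claimed limit $w^\circ=\sum_k c_k z^\circ_k$ (note this candidate limit does not depend on the desired model $w^\circ_q$), and then to account for the random fluctuation about this mean.

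The first key step extracts an exact scalar identity by projecting the fixed-point equation $(I-\mathcal{B})\bar{\tilde w}=y$ onto the left-Perron direction. Since $c^TA^T=c^T$ by (\ref{eq9}), the row vector $c^T\otimes I_M$ is a left eigenvector of $\mathcal{A}^T$ at eigenvalue one. Left-multiplying by $c^T\otimes I_M$ and using $(c^T\otimes I_M)(A^T\otimes(I_M-\mu R_u))=c^T\otimes(I_M-\mu R_u)$ makes the order-one part cancel and leaves, after dividing by $\mu$,
\begin{equation}
    (c^T\otimes R_u)\,\bar{\tilde w}=(c^T\otimes R_u)\,\tilde z^\circ .
\end{equation}
Because $R_u>0$ is invertible this gives $\sum_k c_k\bar{\tilde w}_k=\sum_k c_k\tilde z^\circ_k$, and translating back through (\ref{eq29}) together with $\mathds{1}_N^Tc=1$ yields the exact relation $\sum_k c_k\bar w_k=\sum_k c_k z^\circ_k=w^\circ$. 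Thus the $c$-weighted average of the limiting means equals $w^\circ$ for every $\mu$.

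The second key step upgrades this averaged identity to a per-node statement by showing the network reaches consensus to within $O(\mu)$. The eigenvalues of $\mathcal{B}=A^T\otimes(I_M-\mu R_u)$ are the products $\lambda_i(A^T)(1-\mu\sigma_j)$ with $\sigma_j$ the eigenvalues of $R_u$; by primitivity (Assumption \ref{ass3}) the only unit eigenvalue of $A^T$ is simple with right eigenvector $\mathds{1}_N$, while all others have magnitude bounded by some $\beta<1$. Hence $\mathcal{B}$ splits, with respect to the $A^T$-invariant decomposition $\mathbb{R}^{NM}=(\mathds{1}_N\otimes\mathbb{R}^M)\oplus\mathcal{V}$, into a ``slow'' block on the consensus subspace $\mathds{1}_N\otimes\mathbb{R}^M$, where $I-\mathcal{B}$ has eigenvalues $\mu\sigma_j=O(\mu)$, and a complementary ``fast'' block where $I-\mathcal{B}$ is invertible with inverse bounded uniformly in small $\mu$. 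Decomposing $y=O(\mu)$ along these two invariant subspaces, the fast component of $\bar{\tilde w}$ is $O(\mu)$ while the slow component lies in $\mathds{1}_N\otimes\mathbb{R}^M$; therefore $\bar{\tilde w}=\mathds{1}_N\otimes b+O(\mu)$ for some $b\in\mathbb{R}^M$, i.e., $\bar w_k=(w^\circ_q-b)+O(\mu)$ agrees across $k$ up to $O(\mu)$. Combined with $\sum_k c_k\bar w_k=w^\circ$ and $\sum_k c_k=1$, this forces $\bar w_k=w^\circ+O(\mu)$ for every node.

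Finally, I would pass from the mean to the mean-square error by writing $\mathbb{E}\|w^\circ-\boldsymbol{w}_{k,i}\|^2=\|w^\circ-\mathbb{E}\boldsymbol{w}_{k,i}\|^2+\mathbb{E}\|\boldsymbol{w}_{k,i}-\mathbb{E}\boldsymbol{w}_{k,i}\|^2$. The bias term is $O(\mu^2)$ by the previous paragraph, while the fluctuation term is the steady-state mean-square deviation of the diffusion recursion (\ref{eq3}); since $\rho(\mathcal{B})<1$ and the zero-mean driving term $\boldsymbol{y}_i-y$ has covariance of order $\mu^2$, the standard mean-square stability analysis of diffusion strategies (as in \cite{Sayed13a,Tu12a,Chen13b}) gives a bounded steady-state deviation of order $\mu$. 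Adding the two contributions yields $\mathbb{E}\|w^\circ-\boldsymbol{w}_{k,i}\|^2=O(\mu)$, as claimed. I expect the main obstacle to be the second step: because $I-\mathcal{B}$ degenerates as $\mu\to0$ one cannot simply take limits in $(I-\mathcal{B})^{-1}y$, and the correct leading-order limit only emerges after separating the slow consensus direction (handled by the eigenvector projection) from the uniformly-stable complementary directions (which contribute the $O(\mu)$ disagreement).
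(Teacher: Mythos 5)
Your proposal is correct, but it proceeds quite differently from the paper. The paper's entire proof is a citation: it invokes Eq.~(25) of \cite{Chen13b}, which characterizes the limit point of diffusion strategies as the point $w^\circ$ where the $c$-weighted combination of the individual (gradient) update directions vanishes, $\sum_k c_k s_k(w^\circ)=0$; substituting $s_k(w)=R_u(z_k^\circ-w)$ and using $\mathds{1}_N^Tc=1$ immediately yields (\ref{eq91}), and the same reference supplies the $O(\mu)$ mean-square bound. You instead give a self-contained, first-principles identification of the limit: you write $\mathcal{B}=A^T\otimes(I_M-\mu R_u)$ and $y=\mu(A^T\otimes R_u)\tilde z^\circ$, extract the exact identity $\sum_k c_k\bar w_k=\sum_k c_k z_k^\circ$ by projecting the fixed-point equation onto the left-Perron direction $c^T\otimes I_M$ (so the $O(1)$ terms cancel and the identity holds for every $\mu$), and then upgrade it to a per-node statement via the slow/fast invariant-subspace splitting $(\mathds{1}_N\otimes\mathbb{R}^M)\oplus\mathcal{V}$, where $(I-\mathcal{B})^{-1}$ is $O(\mu^{-1})$ only along the consensus direction and uniformly bounded on $\mathcal{V}$ --- this is exactly the right way to handle the degeneracy of $I-\mathcal{B}$ as $\mu\to 0$, which a naive limit in $(I-\mathcal{B})^{-1}y$ would miss. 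What your route buys is transparency: it makes visible why the Perron weights $c_k$ appear and why node-level disagreement is only $O(\mu)$, and it requires nothing beyond the paper's own Assumption~\ref{ass3} and condition (\ref{eq26}) for the mean analysis. What the paper's route buys is brevity and generality: the cited framework covers general individual costs and non-uniform parameters and hands over the steady-state $O(\mu)$ mean-square deviation in the same stroke, whereas your bias--variance step must still lean on the standard (cited) mean-square stability analysis of the random recursion (\ref{eq3}), since $\rho(\mathcal{B})<1$ alone does not control the fluctuation term. Both proofs are sound; yours replaces the black-box citation for the limit point with an explicit argument.
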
{
\begin{proof}
The result follows from Eq. (25) in \cite{Chen13b} by noting that
the variable $s_k(w^\circ)$ used in \cite{Chen13b} is given by
$R_{u}(z^\circ_k-w^\circ)$.
\end{proof}}
\noindent Thus, when the agents collect data from different models,
the estimates using the diffusion strategy (\ref{eq49})-(\ref{eq1})
converge to a convex combination of these models given by
(\ref{eq91}), which is different from any of the individual models
because $c_k>0$ for all $k$. A similar conclusion holds for the case
of non-uniform step-sizes $\{\mu_k\}$ and covariance matrices
$\{R_{u,k}\}$.

\section{Modified Diffusion Strategy}
To deal with the problem of bias, we now show how to modify the
diffusion strategy (\ref{eq49})-(\ref{eq1}). We observe from the
example in Fig. \ref{Fig_2} that the third entry of the vector $y$
cannot be zero because the neighbor of node $3$ observes data
arising from a model that is different from the desired model. Note
from (\ref{eq20}) that the bias term arises from the gradient
direction used in computing the intermediate estimates in
(\ref{eq49}). These observations suggest that to ensure unbiased
mean convergence, a node should not combine intermediate estimates
from neighbors whose observed model is different from the desired
model. For this reason, we shall replace the intermediate estimates
from these neighbors by their previous estimates
$\{\boldsymbol{w}_{l,i-1}\}$ in the combination step (\ref{eq1}).
Specifically, we shall adjust the diffusion strategy
(\ref{eq49})-(\ref{eq1}) as follows:
\begin{align}\label{eq6}
    \boldsymbol{\psi}_{k,i} &= \boldsymbol{w}_{k,i-1}+
    \mu_k\cdot\boldsymbol{u}_{k,i}^T
    [\boldsymbol{d}_k(i)-\boldsymbol{u}_{k,i}\boldsymbol{w}_{k,i-1}]\\
    \boldsymbol{w}_{k,i} &= \sum_{l\in\mathcal{N}_k}\left(a^{(1)}_{l,k}
    \boldsymbol{\psi}_{l,i}
    +a^{(2)}_{l,k}\boldsymbol{w}_{l,i-1}\right)\label{eq15}
\end{align}
where the $\{a^{(1)}_{l,k}\}$ and $\{a^{(2)}_{l,k}\}$ are two sets
of nonnegative scalars and their respective combination matrices
$A_1$ and $A_2$ satisfy
\begin{equation}\label{eq111}
    A_1+A_2=A
\end{equation}
with $A$ being the original left-stochastic matrix in (\ref{eq50}).
Note that step (\ref{eq6}) is the same as step (\ref{eq49}).
However, in the second step (\ref{eq15}), nodes aggregate the
$\{\boldsymbol{\psi}_{l,i},\boldsymbol{w}_{l,i-1}\}$ from their
neighborhood. With such adjustment, we will verify that by properly
selecting $\{a^{(1)}_{l,k},a^{(2)}_{l,k}\}$, unbiased mean
convergence can be guaranteed. The choice of which entries of $A$ go
into $A_1$ or $A_2$ will depend on which of the neighbors of node
$k$ are observing data arising from a model that agrees with the
desired model for node $k$.

\subsection{Construction of Matrices $A_1$ and $A_2$}
To construct the matrices $\{A_1,A_2\}$ we associate two vectors
with the network, $f$ and $g_i$. Both vectors are of size $N$. The
vector $f$ is fixed and its $k$th entry, $f(k)$, is set to $f(k)=0$
when the observed model for node $k$ is $w_0^\circ$; otherwise, it
is set to $f(k)=1$. On the other hand, the vector $g_i$ is evolving
with time; its $k$th entry is set to $g_i(k)=0$ when the desired
model for node $k$ is $w_0^\circ$; otherwise, it is set equal to
$g_i(k)=1$. Then, we shall set the entries of $A_1$ and $A_2$
according to the following rules:
\begin{align}\label{eq24}
    & a^{(1)}_{l,k,i} =
    \begin{cases}
    a_{l,k}, &\text{if $l\in\mathcal{N}_k$ and $f(l)= g_i(k)$}\\
    0, &\text{otherwise}
    \end{cases}\\
    & a^{(2)}_{l,k,i} =
    \begin{cases}
    a_{l,k}, &\text{if $l\in\mathcal{N}_k$ and $f(l)\neq g_i(k)$}\\
    0, &\text{otherwise}
    \end{cases}.\label{eq25}
\end{align}
That is, nodes that observe data arising from the same model that
node $k$ wishes to converge to will be reinforced and their
intermediate estimates $\{\boldsymbol{\psi}_{l,i}\}$ will be used
(their combination weights are collected into matrix $A_1$). On the
other hand, nodes that observe data arising from a different model
than the objective for node $k$ will be de-emphasized and their
prior estimates $\{\boldsymbol{w}_{l,i-1}\}$ will be used in the
combination step (\ref{eq15}) (their combination weights are
collected into matrix $A_2$). Note that the scalars
$\{a^{(1)}_{l,k,i},a^{(2)}_{l,k,i}\}$ in (\ref{eq24})-(\ref{eq25})
are now indexed with time due to their dependence on $g_i(k)$.

\subsection{Mean-Error Analysis}
It is important to note that to construct the combination weights
from (\ref{eq24})-(\ref{eq25}), each node $k$ needs to know what are
the observed models influencing its neighbors (i.e., $f(l)$ for
$l\in\mathcal{N}_k$); it also needs to know how to update its
objective in $g_i(k)$ so that the $\{g_i(l)\}$ converge to the same
value. In the next two sections, we will describe a distributed
decision-making procedure by which the nodes are able to achieve
agreement on $\{g_i(k)\}$. We will also develop a classification
scheme to estimate $\{f(l)\}$ using available data. More
importantly, the convergence of the vectors $\{f,g_i\}$ will occur
before the convergence of the adaptation process to estimate the
agreed-upon model. Therefore, let us assume for the time being that
the nodes know the $\{f(l)\}$ of their neighbors and have achieved
agreement on the desired model, which we are denoting by
$w_q^\circ$, so that (see Eq. (\ref{eq127}) in Theorem \ref{thm1})
\begin{equation}\label{eq30}
    g_i(1)=g_i(2)=\cdots=g_i(N)=q,\;\; \text{ for all $i$.}
\end{equation}

Using relation (\ref{eq20}) and the modified diffusion strategy
(\ref{eq6})-(\ref{eq15}), the recursion for the global error vector
$\tilde{\boldsymbol{w}}_{i}$ is again given by (\ref{eq3}) with the
matrix $\boldsymbol{\mathcal{B}}_i$ and the vector
$\boldsymbol{y}_i$ defined in Table \ref{tab1} and the combination
matrices $\mathcal{A}_1$ and $\mathcal{A}_2$ defined in a manner
similar to $\mathcal{A}$ in (\ref{eq56}). We therefore get the same
mean recursion as (\ref{eq5}) with the matrix $\mathcal{B}$ and the
vector $y$ defined in Table \ref{tab1}. The following result
establishes asymptotic mean convergence for the modified diffusion
strategy (\ref{eq6})-(\ref{eq15}).
\begin{thm}\label{thm2}
Under condition (\ref{eq30}), the modified diffusion strategy
(\ref{eq6})-(\ref{eq15}) converges in the mean if the matrices $A_1$
and $A_2$ are constructed according to (\ref{eq24})-(\ref{eq25}) and
the step-sizes $\{\mu_k\}$ satisfy condition (\ref{eq26}) for those
nodes whose observed model is the same as the desired model
$w^\circ_q$ for the network.
\end{thm}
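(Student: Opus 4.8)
The plan is to reduce the statement to the necessary and sufficient condition (\ref{eq22}) for the mean recursion (\ref{eq5}): convergence in the mean holds if and only if $\rho(\mathcal{B})<1$ and $y=0$, with $\mathcal{B}$ and $y$ read off from the ``Modified diffusion'' column of Table \ref{tab1}. The proof therefore splits into an unbiasedness part ($y=0$) and a stability part ($\rho(\mathcal{B})<1$), and I would treat them in that order.

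For unbiasedness, Table \ref{tab1} gives $y=\mathcal{A}_1^T\mathcal{M}\mathcal{R}\tilde{z}^\circ$, whose $k$th block is $\sum_l a^{(1)}_{l,k}\mu_l R_{u,l}\tilde{z}^\circ_l$. Under (\ref{eq30}) we have $g_i(k)=q$ for every $k$, so by the construction (\ref{eq24}) the weight $a^{(1)}_{l,k}$ is nonzero only when $f(l)=q$, i.e.\ only when node $l$'s observed model is $w^\circ_q$, which forces $\tilde{z}^\circ_l=w^\circ_q-z^\circ_l=0$. Hence every surviving term vanishes and $y=0$; the reinforcement rule was designed precisely so that the mean offset is collected only from nodes that are already unbiased.

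For stability I would first put $\mathcal{B}$ into a transparent form. Writing $P=\text{diag}\{p_k\}$ with $p_k=1$ when $f(k)=q$ and $p_k=0$ otherwise, the rules (\ref{eq24})--(\ref{eq25}) under (\ref{eq30}) amount to $A_1=PA$ and $A_2=(I-P)A$ (they keep or zero the rows of $A$ according to whether each node is conforming), consistent with (\ref{eq111}). With $\mathcal{P}=P\otimes I_M$ this gives $\mathcal{A}_1^T=\mathcal{A}^T\mathcal{P}$, and substituting into the Table \ref{tab1} expression collapses $\mathcal{B}$ to $\mathcal{B}=\mathcal{A}^T(I_{NM}-\mathcal{P}\mathcal{M}\mathcal{R})$. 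Notably this involves only the step-sizes of conforming nodes, matching the hypothesis. Writing $\mathcal{G}=I_{NM}-\mathcal{P}\mathcal{M}\mathcal{R}=\text{diag}\{G_k\}$, condition (\ref{eq26}) yields $\rho(G_k)=\rho(I_M-\mu_k R_{u,k})<1$ at conforming nodes, while $G_k=I_M$ (so $\rho(G_k)=1$) at the remaining nodes.

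The crux, and the step I expect to be the main obstacle, is that a direct submultiplicative bound gives only $\rho(\mathcal{B})\le\|\mathcal{A}^T\|\,\|\mathcal{G}\|\le 1$, since the non-conforming blocks contribute $\rho(G_k)=1$; strictness must be extracted from the topology. I would obtain it by a scalar comparison: using the block-maximum norm together with $[\mathcal{B}]_{k,l}=a_{l,k}G_l$, one verifies $\|[\mathcal{B}^j]_{k,l}\|_2\le[C^j]_{k,l}$ for the nonnegative matrix $C\triangleq A^T\Lambda$ with $\Lambda=\text{diag}\{\rho(G_k)\}$, whence $\rho(\mathcal{B})\le\rho(C)$. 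Now $C\ge 0$ has row sums $\sum_l a_{l,k}\rho(G_l)\le 1$; by Assumption \ref{ass3} the matrix $A$ is primitive (so $C$ is irreducible), and the existence of at least one conforming node $l_0$, which by primitivity has a nonzero outgoing weight, forces one row sum of $C$ to be strictly below $1$. Applying the positive left Perron eigenvector $w$ of $C$ to $C\mathds{1}\le\mathds{1}$ then gives $\rho(C)\,w^T\mathds{1}=w^TC\mathds{1}<w^T\mathds{1}$, i.e.\ $\rho(C)<1$ and hence $\rho(\mathcal{B})<1$. Combining this with $y=0$ verifies (\ref{eq22}) and establishes mean convergence; the knife-edge cases $\rho(G_k)=0$ only delete a node from $C$ and strengthen the contraction, so they do not affect the conclusion.
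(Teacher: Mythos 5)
Your proof is correct, and its reduction step is exactly the paper's: Appendix \ref{appC} likewise notes that under (\ref{eq30}) the $A_1$-weights select only neighbors with $\tilde{z}^\circ_l=0$, so $y=\mathcal{A}_1^T\mathcal{M}\mathcal{R}\tilde{z}^\circ=0$, and that $\mathcal{B}$ collapses to $\mathcal{A}^T(I_{NM}-\mathcal{M}_e\mathcal{R})$ with $\mathcal{M}_e$ zeroing the step-sizes of the non-conforming nodes --- identical to your $\mathcal{A}^T(I_{NM}-\mathcal{P}\mathcal{M}\mathcal{R})$. The genuine difference is the stability step. The paper does not prove $\rho(\mathcal{B})<1$ at all: it observes that the reduced recursion is that of a standard diffusion network with a mixture of informed nodes (positive step-sizes) and uninformed nodes (zero step-sizes) and invokes Theorem 1 of \cite{Tu13a}, whose reachability hypothesis holds because $A$ is primitive. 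You instead prove the bound from scratch: dominate $\mathcal{B}^j$ blockwise by powers of the nonnegative comparison matrix $C=A^T\Lambda$, $\Lambda=\mathrm{diag}\{\rho(G_k)\}$ (this step silently uses $\|G_k\|_2=\rho(G_k)$, valid because each $G_k$ is symmetric), then run a Perron--Frobenius row-sum argument in which a single conforming node with an outgoing edge forces one row sum of $C$ strictly below one, and the positive left Perron vector propagates the strict inequality to $\rho(C)$. Your route is self-contained and makes explicit exactly where strong connectivity enters; the paper's route is shorter and yields the informed/uninformed interpretation that it reuses later (Section \ref{sec4}) when discussing convergence rate. One soft spot: in the knife-edge case $\rho(G_k)=0$ the matrix $C$ acquires a zero column and is no longer irreducible, so ``deleting the node'' needs a short extra argument (e.g., via the Frobenius normal form: any strongly connected block with all row sums equal to one would have to be closed under $A$-in-neighbors, hence equal to the whole network by connectivity, contradicting the presence of a conforming node). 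Since that configuration is degenerate and you flagged it, this is a cosmetic rather than substantive gap.
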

\begin{proof}
See Appendix \ref{appC}.
\end{proof}
We conclude from the argument in Appendix \ref{appC} that the net
effect of the construction (\ref{eq24})--(\ref{eq25}) is the
following. Let $w_q^\circ$ denote the desired model that the network
wishes to converge to. We denote by $\mathcal{N}_q$ the subset of
nodes that receive data arising from the \emph{same} model. The
remaining nodes belong to the set $\mathcal{N}^c_q$. Nodes that
belong to the set $\mathcal{N}_q$ run the traditional diffusion
strategy (\ref{eq49})-(\ref{eq1}) using the combination matrix $A$
and their step-sizes are required to satisfy (\ref{eq26}). The
remaining nodes in $\mathcal{N}_q^c$ set their step-sizes to zero
and run only step (\ref{eq1}) of the diffusion strategy. These nodes
do not perform the adaptive update (\ref{eq49}) and therefore their
estimates satisfy $\boldsymbol{\psi}_{k,i}=\boldsymbol{w}_{k,i-1}$
for all $k\in \mathcal{N}^c_q$.

\begin{figure}
\centering
\includegraphics[width=25em]{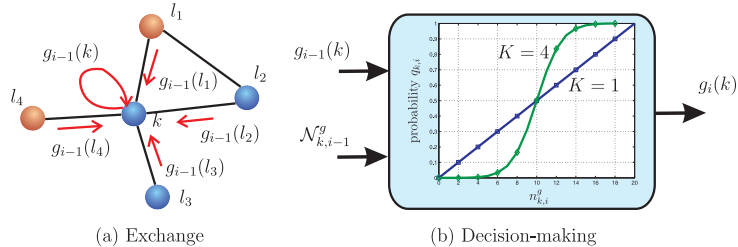}
\caption{Decision-making process (a) node $k$ receives the desired
models from its neighbors (b) node $k$ updates its desired model
using (\ref{eq17})-(\ref{eq57}) with 20 neighbors.} \label{Fig_13}
\end{figure}

\section{Distributed Decision-Making}\label{sec2}
The decision-making process is motivated by the process used by
animal groups to reach agreement, and which is known as quorum
response \cite{Britton02,Pratt02,Sumpter09}. The procedure is
illustrated in Fig. \ref{Fig_13} and described as follows. At time
$i$, every node $k$ has its previous desired model
$\boldsymbol{g}_{i-1}(k)$, now modeled as a random variable since it
will be constructed from data realizations that are subject to
randomness. Node $k$ exchanges $\boldsymbol{g}_{i-1}(k)$ with its
neighbors and constructs the set
\begin{equation}\label{eq98}
    \mathcal{N}_{k,i-1}^g =
    \{l\;|\;l\in\mathcal{N}_k,\boldsymbol{g}_{i-1}(l)=\boldsymbol{g}_{i-1}(k)\}.
\end{equation}
That is, the set $\mathcal{N}_{k,i-1}^g$ contains the subset of
nodes that are in the neighborhood of $k$ and have the same desired
model as node $k$ at time $i-1$. This set changes over time. Let
$n^g_k(i-1)$ denote the number of nodes in $\mathcal{N}_{k,i-1}^g$.
Since at least one node (node $k$) belongs to
$\mathcal{N}_{k,i-1}^g$, we have that $n^g_k(i-1)\geq 1$. Then, one
way for node $k$ to participate in the quorum response is to update
its desired model $\boldsymbol{g}_i(k)$ according to the rule:
\begin{equation}\label{eq17}
    \boldsymbol{g}_i(k) = \begin{cases}
    \boldsymbol{g}_{i-1}(k), &\text{with probability $q_{k,i-1}$}\\
    1-\boldsymbol{g}_{i-1}(k), &\text{with probability $1-q_{k,i-1}$}
    \end{cases}
\end{equation}
where the probability measure is computed as:
\begin{equation}\label{eq57}
    q_{k,i-1} = \frac{[n^g_k(i-1)]^K}{[n^g_k(i-1)]^K+[n_k-n^g_k(i-1)]^K} > 0
\end{equation}
and the exponent $K$ is a positive constant (e.g., $K=4$). That is,
node $k$ determines its desired model in a probabilistic manner, and
the probability that node $k$ maintains its desired target is
proportional to the $K$th power of the number of neighbors having
the same desired model (see Fig. \ref{Fig_13}(b)). Using the above
stochastic formulation, we are able to establish agreement on the
desired model among the nodes.
\begin{thm}\label{thm1}
For a connected network starting from an arbitrary initial selection
for the desired models vector $\boldsymbol{g}_{i}$ at time $i=-1$,
and applying the update rule (\ref{eq98})-(\ref{eq57}), then all
nodes eventually achieve agreement on some desired model, i.e.,
\begin{equation}\label{eq127}
    \boldsymbol{g}_i(1)=\boldsymbol{g}_i(2)=
    \ldots=\boldsymbol{g}_i(N),\quad \text{as $i\rightarrow \infty$.}
\end{equation}
\end{thm}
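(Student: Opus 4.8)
The plan is to view the desired-model dynamics (\ref{eq17})--(\ref{eq57}) as a time-homogeneous Markov chain $\{\boldsymbol{g}_i\}$ on the finite state space $\{0,1\}^N$, in which, conditioned on $\boldsymbol{g}_{i-1}$, the $N$ nodes update their entries independently, and then to appeal to the absorption theory of finite Markov chains. The two \emph{consensus} configurations (all entries equal to $0$, and all entries equal to $1$) are absorbing: if the entries of $\boldsymbol{g}_{i-1}$ all coincide, then every node $k$ sees $n^g_k(i-1)=n_k$, so that $q_{k,i-1}=1$ by (\ref{eq57}) and node $k$ keeps its value with probability one. It therefore suffices to prove that, from every configuration, the chain reaches one of these two states with probability one.

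First I would isolate the two per-node facts that drive the argument. For any configuration and any node $k$, node $k$ itself belongs to $\mathcal{N}^g_{k,i-1}$, so $n^g_k(i-1)\ge 1$ and hence $q_{k,i-1}>0$; a node can always \emph{retain} its value with positive probability. Furthermore, $q_{k,i-1}<1$ exactly when $n^g_k(i-1)<n_k$, i.e., exactly when node $k$ has at least one neighbor holding the opposite value; in that case node $k$ can \emph{flip} with positive probability. By the conditional independence of the updates across nodes, any one-step transition $g\mapsto g'$ in which every flipping node possesses an opposite-valued neighbor in $g$ and every other node keeps its value has strictly positive probability.

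The heart of the proof is a reachability claim: from any non-consensus configuration $g$ (one containing both values) there is a positive-probability path of length at most $N$ to the all-zero configuration. I would establish this by a monotone erosion step. Since $g$ contains both values, the set $S=\{k:g(k)=0\}$ and its complement are both nonempty, so connectivity of the network forces an edge across the cut $(S,S^c)$; consequently some node with value $1$ has a neighbor with value $0$. Let $g'$ be obtained from $g$ by flipping \emph{every} value-$1$ node that has a value-$0$ neighbor and leaving all other nodes unchanged. Each flipping node has an opposite-valued neighbor and each unchanged node keeps its value, so by the preceding paragraph the transition $g\mapsto g'$ has positive probability; moreover $g'$ has strictly more zeros than $g$. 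Since the number of zeros is bounded by $N$, iterating this step at most $N$ times yields a positive-probability path to the all-zero configuration, which is a consensus state.

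Finally I would combine reachability with finiteness. As the state space is finite, the minimum over all starting states of the probability of reaching a consensus configuration within $N$ steps is some $\delta>0$ (equal to one for states already in consensus, and positive for the rest by the reachability claim). By the Markov property, the probability of avoiding consensus for $mN$ consecutive steps is at most $(1-\delta)^m$, which tends to $0$ as $m\to\infty$; hence the chain is absorbed into a consensus configuration in finite time almost surely and, by absorption, remains there, which is exactly (\ref{eq127}). The main obstacle is this reachability claim: a single step generally cannot reach consensus directly, because an \emph{interior} node of a monochromatic block has $n^g_k(i-1)=n_k$ and thus cannot flip ($q_{k,i-1}=1$). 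The erosion argument circumvents this by shrinking each block from its boundary, and it is precisely here that connectivity is indispensable, since it guarantees the existence of a boundary to erode whenever the configuration is not yet in consensus.
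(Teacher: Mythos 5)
Your proof is correct, and it shares the paper's high-level strategy---cast the decision dynamics as a finite Markov chain whose two consensus configurations are absorbing, then show absorption occurs almost surely---but the implementation is genuinely different. The paper aggregates the configuration into the count $\chi_i=\sum_k \boldsymbol{g}_i(k)$ and studies a chain on the $N+1$ states $\{0,1,\ldots,N\}$: connectivity supplies a boundary node that can flip, giving $p_{n,n-1}>0$ and $p_{n,n+1}>0$ for every transient $n$ (see (\ref{eq123})), after which a textbook absorption theorem is invoked. You instead keep the full configuration space $\{0,1\}^N$ and build an explicit positive-probability path to consensus by flipping, in parallel, every value-$1$ node that has a value-$0$ neighbor; since the number of zeros strictly increases, at most $N$ such erosion steps reach the all-zero state, and your geometric-trials estimate $(1-\delta)^m\to 0$ finishes the argument. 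Your route buys two things: it is self-contained (you prove the absorption step rather than cite it), and it sidesteps a technical wrinkle in the paper---the count $\chi_i$ is not by itself a Markov chain, because its transition probabilities depend on which configuration realizes the count, which the paper patches by averaging over \emph{a priori} configuration probabilities in (\ref{eq121}); your configuration-level chain is genuinely Markov, so no such patch is needed. What the paper's aggregation buys in exchange is the compact transition structure (\ref{eq81}), which is reused later to quantify the convergence rate via $\rho(Q)$ in Theorem \ref{thm4}; your formulation establishes Theorem \ref{thm1} cleanly but would not directly support that later analysis.
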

\begin{proof}
See Appendix \ref{appD}.
\end{proof}

Although rule (\ref{eq98})-(\ref{eq57}) ensures agreement on the
decision vector, this construction is still not a distributed
solution for one subtle reason: nodes need to agree on which index
(0 or 1) to use to refer to either model $\{w_0^\circ,w_1^\circ\}$.
This task would in principle require the nodes to share some global
information. We circumvent this difficulty and develop a distributed
solution as follows. Moving forward, we now associate with each node
$k$ two local vectors $\{f_k,\boldsymbol{g}_{k,i}\}$; these vectors
will play the role of local estimates for the network vectors
$\{f,\boldsymbol{g}_i\}$. Each node will then assign the index value
of one to its observed model, i.e., each node $k$ sets $f_k(k)=1$.
Then, for every $l\in\mathcal{N}_k$, the entry $f_k(l)$ is set to
one if it represents the same model as the one observed by node $k$;
otherwise, $f_k(l)$ is set to zero. The question remains about how
node $k$ knows whether its neighbors have the same observed model as
its own (this is discussed in the next section). Here we comment
first on how node $k$ adjusts the entries of its vector
$\boldsymbol{g}_{k,i-1}$. Indeed, node $k$ knows its desired model
value $\boldsymbol{g}_{k,i-1}(k)$ from time $i-1$. To assign the
remaining neighborhood entries in the vector
$\boldsymbol{g}_{k,i-1}$, the nodes in the neighborhood of node $k$
first exchange their desired model indices with node $k$, that is,
they send the information $\{\boldsymbol{g}_{l,i-1}(l),\;l\in
\mathcal{N}_k\}$ to node $k$. However, since
$\boldsymbol{g}_{l,i-1}(l)$ from node $l$ is set relative to its
$f_l(l)$, node $k$ needs to set $\boldsymbol{g}_{k,i-1}(l)$ based on
the value of $f_k(l)$. Specifically, node $k$ will set
$g_{k,i-1}(l)$ according to the rule:
\begin{equation}\label{eq31}
    \boldsymbol{g}_{k,i-1}(l) =
    \begin{cases}
    \boldsymbol{g}_{l,i-1}(l), &\text{if $f_k(l)=f_k(k)$}\\
    1-\boldsymbol{g}_{l,i-1}(l), &\text{otherwise}
    \end{cases}.
\end{equation}
That is, if node $l$ has the same observed model as node $k$, then
node $k$ simply assigns the value of $\boldsymbol{g}_{l,i-1}(l)$ to
$\boldsymbol{g}_{k,i-1}(l)$.

In this way, computations that depend on the network vectors
$\{f,\boldsymbol{g}_{i}\}$ will be replaced by computations using
the local vectors $\{f_k,\boldsymbol{g}_{k,i}\}$. That is, the
quantities $\{f(l),g_i(l)\}$ in (\ref{eq24})-(\ref{eq25}) and
(\ref{eq98})-(\ref{eq57}) are now replaced by
$\{f_k(l),\boldsymbol{g}_{k,i}(l)\}$. We verify in the following
that using the network vectors $\{f,\boldsymbol{g}_{i}\}$ is
equivalent to using the local vectors
$\{f_k,\boldsymbol{g}_{k,i}\}$.
\begin{lem}
It holds that
\begin{align}\label{eq34}
    f(l) \oplus \boldsymbol{g}_i(k) &= f_k(l)\oplus \boldsymbol{g}_{k,i}(k) \\
    \boldsymbol{g}_i(l) \oplus \boldsymbol{g}_{i}(k) &=
    \boldsymbol{g}_{k,i}(l)\oplus \boldsymbol{g}_{k,i}(k)\label{eq35}
\end{align}
where the symbol $\oplus$ denotes the exclusive-OR operation.
\end{lem}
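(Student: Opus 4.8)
The plan is to prove the two identities by carefully tracking the relationship between the local index conventions and the global one, reducing everything to a single global reference frame via XOR arithmetic. The key observation is that each node $k$ may use a different convention for labeling the two models: node $k$ sets $f_k(k)=1$ for its own observed model, whereas the global vector $f$ uses a fixed absolute convention ($f(l)=0$ for $w_0^\circ$). So for each node $k$ there is a single ``sign bit'' $\sigma_k\in\{0,1\}$ recording whether $k$'s local convention agrees with the global one; concretely, I would define $\sigma_k = f(k)\oplus f_k(k) = f(k)\oplus 1$. The whole proof then rests on showing that both $f_k(l)$ and $\boldsymbol{g}_{k,i}(l)$ are just the global quantities $f(l)$ and $\boldsymbol{g}_i(l)$ XORed with this same bit $\sigma_k$, after which the claimed identities fall out because the $\sigma_k$ terms cancel.

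First I would establish the translation rule for $f$. Since $f_k(l)=f_k(k)$ precisely when node $l$ has the same observed model as node $k$, and $f(l)=f(k)$ under the same circumstance, the equality $f_k(l)=f_k(k)$ holds if and only if $f(l)=f(k)$; equivalently $f_k(l)\oplus f_k(k) = f(l)\oplus f(k)$. Rearranging and using $f_k(k)=1$ gives $f_k(l) = f(l)\oplus f(k)\oplus 1 = f(l)\oplus\sigma_k$. Next I would establish the analogous rule for $\boldsymbol{g}$. By construction (\ref{eq31}), $\boldsymbol{g}_{k,i}(l)=\boldsymbol{g}_{l,i}(l)$ when $f_k(l)=f_k(k)$ and $\boldsymbol{g}_{k,i}(l)=1\oplus\boldsymbol{g}_{l,i}(l)$ otherwise, which compactly reads $\boldsymbol{g}_{k,i}(l) = \boldsymbol{g}_{l,i}(l)\oplus\bigl(f_k(l)\oplus f_k(k)\bigr) = \boldsymbol{g}_{l,i}(l)\oplus f(l)\oplus f(k)$. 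The final ingredient is to relate each node's own desired-model bit $\boldsymbol{g}_{l,i}(l)$ to the global value $\boldsymbol{g}_i(l)$: since the global desired vector is expressed in the absolute convention while node $l$ expresses it in its own, one has $\boldsymbol{g}_i(l) = \boldsymbol{g}_{l,i}(l)\oplus\sigma_l = \boldsymbol{g}_{l,i}(l)\oplus f(l)\oplus 1$. Substituting this back yields the clean translation $\boldsymbol{g}_{k,i}(l) = \boldsymbol{g}_i(l)\oplus\sigma_k$, matching the form already found for $f$.

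With both translation rules $f_k(l)=f(l)\oplus\sigma_k$ and $\boldsymbol{g}_{k,i}(l)=\boldsymbol{g}_i(l)\oplus\sigma_k$ in hand (the latter specializing to $\boldsymbol{g}_{k,i}(k)=\boldsymbol{g}_i(k)\oplus\sigma_k$), the two claimed identities are immediate. For (\ref{eq34}) I compute $f_k(l)\oplus\boldsymbol{g}_{k,i}(k) = \bigl(f(l)\oplus\sigma_k\bigr)\oplus\bigl(\boldsymbol{g}_i(k)\oplus\sigma_k\bigr) = f(l)\oplus\boldsymbol{g}_i(k)$, since $\sigma_k\oplus\sigma_k=0$. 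Identity (\ref{eq35}) follows the same way: $\boldsymbol{g}_{k,i}(l)\oplus\boldsymbol{g}_{k,i}(k) = \bigl(\boldsymbol{g}_i(l)\oplus\sigma_k\bigr)\oplus\bigl(\boldsymbol{g}_i(k)\oplus\sigma_k\bigr) = \boldsymbol{g}_i(l)\oplus\boldsymbol{g}_i(k)$.

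The routine parts here are the final XOR cancellations. The main obstacle I anticipate is pinning down precisely the right definition of the per-node sign bit $\sigma_k$ and confirming that the \emph{same} bit governs the translation of both $f$ and $\boldsymbol{g}$ at node $k$ --- this is what makes the cancellation work and is the conceptual heart of the argument. In particular, one must be careful that the relabeling in (\ref{eq31}) uses node $k$'s view of node $l$'s observed model ($f_k(l)$ versus $f_k(k)$), and to verify that this view is consistent with the global relation $f(l)\oplus f(k)$; the consistency of local observations with the global labeling across a connected network is the subtle point worth stating explicitly rather than the bookkeeping that surrounds it.
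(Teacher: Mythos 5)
Your proof is correct and takes essentially the same approach as the paper: your translation rules $f_k(l)=f(l)\oplus\sigma_k$, $\boldsymbol{g}_{k,i}(k)=\boldsymbol{g}_i(k)\oplus\sigma_k$, and $\boldsymbol{g}_{k,i}(l)=\boldsymbol{g}_i(l)\oplus\sigma_k$ are exactly rearrangements of the three invariance relations $f(k)\oplus f(l)=f_k(k)\oplus f_k(l)$, $f(k)\oplus\boldsymbol{g}_i(k)=f_k(k)\oplus\boldsymbol{g}_{k,i}(k)$, and $f(k)\oplus\boldsymbol{g}_i(l)=f_k(k)\oplus\boldsymbol{g}_{k,i}(l)$ that the paper states, and the final cancellation of $\sigma_k$ is the paper's identity $(a\oplus b)\oplus(a\oplus e)=b\oplus e$. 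The only difference is that you derive the third relation explicitly from (\ref{eq31}) through node $l$'s own frame, where the paper simply asserts it from the construction --- a slightly more careful write-up of the same argument.
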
{
\begin{proof}
Since the values of
$\{f_k(l),\boldsymbol{g}_{l,i}(l),\boldsymbol{g}_{k,i}(l)\}$ are set
relative to $f_k(k)$, it holds that
\begin{align}
    f(k) \oplus f(l) &= f_k(k)\oplus f_k(l) \\
    f(k) \oplus \boldsymbol{g}_i(k) &= f_k(k)\oplus \boldsymbol{g}_{k,i}(k) \\
    f(k) \oplus \boldsymbol{g}_i(l) &= f_k(k)\oplus \boldsymbol{g}_{k,i}(l)
\end{align}
Then relations (\ref{eq34}) and (\ref{eq35}) hold in view of the
fact:
\begin{equation}
    (a \oplus b) \oplus (a \oplus e) = b \oplus e
\end{equation}
for any $a$, $b$, and $e\in\{0,1\}$.
\end{proof}}
\noindent With these replacements, node $k$ still needs to set the
entries $\{f_k(l)\}$ that correspond to its neighbors, i.e., it
needs to differentiate between their underlying models and whether
their data arise from the same model as node $k$ or not. We propose
next a procedure to determine $f_k$ at node $k$ using the available
estimates $\{\boldsymbol{w}_{l,i-1},\boldsymbol{\psi}_{l,i}\}$ for
$l\in\mathcal{N}_k$.

\section{Model Classification Scheme}\label{sec1}
To determine the vector $f_k$, we introduce the belief vector
$\boldsymbol{b}_{k,i}$, whose $l$th entry,
$\boldsymbol{b}_{k,i}(l)$, will be a measure of the belief by node
$k$ that node $l$ has the same observed model. The value of
$\boldsymbol{b}_{k,i}(l)$ lies in the range $[0,1]$. The higher the
value of $\boldsymbol{b}_{k,i}(l)$ is, the more confidence node $k$
has that node $l$ is subject to the same model as its own model. In
the proposed construction, the vector $\boldsymbol{b}_{k,i}$ will be
changing over time according to the estimates
$\{\boldsymbol{w}_{l,i-1},\boldsymbol{\psi}_{l,i}\}$. Node $k$ will
be adjusting $\boldsymbol{b}_{k,i}(l)$ according to the rule:
\begin{equation}\label{eq32}
    \boldsymbol{b}_{k,i}(l)=
    \begin{cases}
    \alpha \boldsymbol{b}_{k,i-1}(l) + (1-\alpha), &\text{to increase belief}\\
    \alpha \boldsymbol{b}_{k,i-1}(l), &\text{to decrease belief}
    \end{cases}
\end{equation}
for some positive scalar $\alpha\in(0,1)$, e.g., $\alpha=0.95$. That
is, node $k$ increases the belief by combining in a convex manner
the previous belief with the value one. Node $k$ then estimates
$f_k(l)$ according to the rule:
\begin{equation}\label{eq37}
    \hat{\boldsymbol{f}}_{k,i}(l) =
    \begin{cases}
    1, &\text{if $\boldsymbol{b}_{k,i}(l)\geq 0.5$}\\
    0, &\text{otherwise}
    \end{cases}
\end{equation}
where $\hat{\boldsymbol{f}}_{k,i}(l)$ denotes the estimate for
$f_k(l)$ at time $i$ and is now a random variable since it will be
computed from data realizations. Note that the value of
$\hat{\boldsymbol{f}}_{k,i}(l)$ may change over time due to
$\boldsymbol{b}_{k,i}(l)$.


Since all nodes have similar processing abilities, it is reasonable
to consider the following scenario.
\begin{assumption}[Homogeneous agents]\label{ass1}
All nodes in the network use the same step-size, $\mu_k=\mu$, and
they observe data arising from the same covariance distribution so
that $R_{u,k}=R_u$ for all $k$.
\end{assumption}
\noindent Agents still need to know whether to increase or decrease
the belief in (\ref{eq32}). We now suggest a procedure that allows
the nodes to estimate the vectors $\{f_k\}$ by focusing on their
behavior in the \emph{far-field} regime when their weight estimates
are usually far from their observed models (see (\ref{eq4}) for a
more specific description). The far-field regime generally occurs
during the initial stages of adaptation and, therefore, the vectors
$\{f_k\}$ can be determined quickly during these initial iterations.

To begin with, we refer to the update vector from (\ref{eq20}),
which can be written as follows for node $l$:
\begin{align}
    \boldsymbol{h}_{l,i} &=
    \mu^{-1}(\boldsymbol{\psi}_{l,i}-\boldsymbol{w}_{l,i-1})\notag
    \\ &=\boldsymbol{u}_{l,i}^T\boldsymbol{u}_{l,i}
    (z^\circ_l-\boldsymbol{w}_{l,i-1})+
    \boldsymbol{u}^T_{l,i}\boldsymbol{v}_l(i).\label{eq36}
\end{align}
Taking expectation of both sides conditioned on
$\boldsymbol{w}_{l,i-1}=w_{l,i-1}$, we have that
\begin{equation}\label{eq21}
    \bar{h}_{l,i}\triangleq
    \mathbb{E}[\boldsymbol{h}_{l,i}\;|\;\boldsymbol{w}_{l,i-1}=w_{l,i-1}]=
    R_{u}(z^\circ_l-w_{l,i-1}).
\end{equation}
That is, the expected update direction given the previous estimate,
$w_{l,i-1}$, is a scaled vector pointing from $w_{l,i-1}$ towards
$z^\circ_l$ with scaling matrix $R_{u}$. Note that since $R_{u}$ is
positive-definite, then the term $\bar{h}_{l,i}$ lies in the same
half plane of the vector $z^\circ_l-w_{l,i-1}$, i.e.,
$\bar{h}^T_{l,i}(z^\circ_l-w_{l,i-1})>0$. Therefore, the update
vector provides useful information about the observed model at node
$l$. For example, this term tells us how close the estimate at node
$l$ is to its observed model. When the magnitude of $\bar{h}_{l,i}$
is large, or the estimate at node $l$ is far from its observed model
$z^\circ_l$, then we say that node $l$ is in a \emph{far-field}
regime. On the other hand, when the magnitude of $\bar{h}_{l,i}$ is
small, then the estimate $w_{l,i-1}$ is close to $z_l^\circ$ and we
say that the node is operating in a \emph{near-field} regime. The
vector $\bar{h}_{l,i}$ can be estimated by the first-order
recursion:
\begin{align} \label{eq18}
    \hat{\boldsymbol{h}}_{l,i} = (1-\nu) \hat{\boldsymbol{h}}_{l,i-1} +
    \nu \mu^{-1}(\boldsymbol{\psi}_{l,i}-\boldsymbol{w}_{l,i-1})
\end{align}
where $\hat{\boldsymbol{h}}_{l,i}$ denotes the estimate for
$\bar{h}_{l,i}$ and $\nu$ is a positive step-size. Note that since
the value of $\bar{h}_{l,i}$ varies with $w_{l,i-1}$, which is
updated using the step-size $\mu$, then the value of $\nu$ should be
set large enough compared to $\mu$ (e.g., $\mu=0.005$ and $\nu=0.05$
are used in our simulations) so that recursion (\ref{eq18}) can
track variations in $\bar{h}_{l,i}$ over time. Moreover, since node
$k$ has access to the
$\{\boldsymbol{w}_{l,i-1},\boldsymbol{\psi}_{l,i}\}$ if node $l$ is
in its neighborhood, node $k$ can compute
$\hat{\boldsymbol{h}}_{l,i}$ on its own using (\ref{eq18}). In the
following, we describe how node $k$ updates the belief
$\boldsymbol{b}_{k,i}(l)$ using
$\{\hat{\boldsymbol{h}}_{k,i},\hat{\boldsymbol{h}}_{l,i}\}$.

\begin{figure}
\centering
\includegraphics[width=18em]{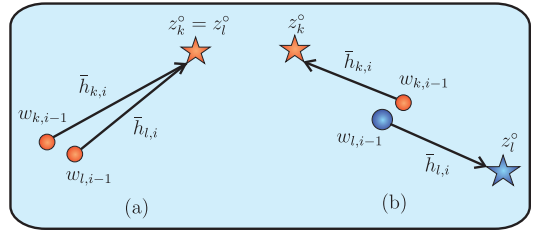}
\caption{Illustration of the vectors $\bar{h}_{k,i}$ and
$\bar{h}_{l,i}$ (a) when both nodes are in far-field and have the
same observed model or (b) different observed models.} \label{Fig_3}
\end{figure}

During the initial stage of adaptation, nodes $k$ and $l$ are
generally away from their respective observed models and both nodes
are therefore in the far-field. This state is characterized by the
conditions
\begin{equation}\label{eq4}
    \|\hat{\boldsymbol{h}}_{k,i}\| > \eta\quad \text{and}
    \quad \|\hat{\boldsymbol{h}}_{l,i}\|> \eta
\end{equation}
for some threshold $\eta$. If both nodes have the same observed
model, then the estimates $\hat{\boldsymbol{h}}_{k,i}$ and
$\hat{\boldsymbol{h}}_{l,i}$ are expected to have similar direction
towards the observed model (see Fig. \ref{Fig_3}(a)). Node $k$ will
increase the belief $\boldsymbol{b}_{k,i}(l)$ using (\ref{eq32}) if
\begin{align}\label{eq10}
    \hat{\boldsymbol{h}}_{k,i}^T\hat{\boldsymbol{h}}_{l,i}>0.
\end{align}
Otherwise, node $k$ will decrease the belief
$\boldsymbol{b}_{k,i}(l)$. That is, when both nodes are in the
far-field, then node $k$ increases its belief that node $l$ shares
the same observed model when the vectors
$\hat{\boldsymbol{h}}_{k,i}$ and $\hat{\boldsymbol{h}}_{l,i}$ lie in
the same quadrant. Note that it is possible for node $k$ to increase
$\boldsymbol{b}_{k,i}(l)$ even when nodes $k$ and $l$ have distinct
models. This is because it is difficult to differentiate between the
models during the initial stages of adaptation. This situation is
handled by the evolving network dynamics as follows. If node $k$
considers that the data from node $l$ originate from the same model,
then node $k$ will use the intermediate estimate
$\boldsymbol{\psi}_{l,i}$ from node $l$ in (\ref{eq15}). Eventually,
from Lemma \ref{lem3}, the estimates at these nodes get close to a
convex combination of the underlying models, which would then enable
node $k$ to distinguish between the two models and to decrease the
value of $\boldsymbol{b}_{k,i}(l)$. Clearly, for proper resolution,
the distance between the models needs to be large enough so that the
agents can resolve them. When the models are very close to each
other so that resolution is difficult, the estimates at the agents
will converge towards a convex combination of the models (which will
be also close to the models). Therefore, the belief
$\boldsymbol{b}_{k,i}(l)$ is updated according to the following
rule:
\begin{equation}\label{eq16}
    \boldsymbol{b}_{k,i}(l)=
    \begin{cases}
    \alpha \boldsymbol{b}_{k,i-1}(l) + (1-\alpha), &\text{if $E_1$}\\
    \alpha \boldsymbol{b}_{k,i-1}(l), &\text{if $E_1^c$}
    \end{cases}
\end{equation}
where $E_1$ and $E_1^c$ are the two events described by:
\begin{align}\label{eq116}
    E_1&: \|\hat{\boldsymbol{h}}_{k,i}\|>\eta,
    \|\hat{\boldsymbol{h}}_{l,i}\|>\eta,\text{ and }
    \hat{\boldsymbol{h}}_{k,i}^T\hat{\boldsymbol{h}}_{l,i}>0\\
    E_1^c&: \|\hat{\boldsymbol{h}}_{k,i}\|>\eta,
    \|\hat{\boldsymbol{h}}_{l,i}\|>\eta,\text{ and }
    \hat{\boldsymbol{h}}_{k,i}^T\hat{\boldsymbol{h}}_{l,i} \leq 0. \label{eq118}
\end{align}
Note that node $k$ updates the belief $b_{k,i}(l)$ only when both
nodes $k$ and $l$ are in the far-field.

\section{Diffusion Strategy with Decision-Making}\label{sec5}
Combining the modified diffusion strategy (\ref{eq6})-(\ref{eq15}),
the combination weights (\ref{eq24})-(\ref{eq25}), the
decision-making process (\ref{eq98})-(\ref{eq57}), and the
classification scheme (\ref{eq37}) and (\ref{eq16}) with
$\{f(l),g_i(l)\}$ replaced by
$\{\hat{\boldsymbol{f}}_{k,i}(l),\boldsymbol{g}_{k,i}(l)\}$, we
arrive at the listing shown in the table. It is seen from the
algorithm that the adaptation and combination steps of diffusion,
which correspond to steps 1) and 8), are now separated by several
steps. The purpose of these intermediate steps is to select the
combination weights properly to carry out the aggregation required
by step 8). Note that to implement the algorithm, nodes need to
exchange the quantities
$\{\boldsymbol{w}_{k,i-1},\boldsymbol{\psi}_{k,i},\boldsymbol{g}_{k,i-1}(k)\}$
with their neighbors. We summarize the computational complexity and
the amount of scalar exchanges of the conventional and modified
diffusion strategies in Table II. Note that the modified strategy
still requires in the order of $O(M)$ computations per iteration.
Nevertheless, the modified diffusion strategy requires about $2n_kM$
more additions and multiplications than conventional diffusion. This
is because of the need to compute the terms
$\{\hat{\boldsymbol{h}}_{l,i}\}$ in step 2). If the nodes can afford
to exchange extra information, then instead of every node connected
to node $l$ computing the term $\hat{\boldsymbol{h}}_{l,i}$ in step
2), this term can be computed locally by node $l$ and shared with
its neighbors. This reveals a useful trade-off between complexity
and information exchange.

\begin{algorithm}[t]{\footnotesize
\caption{(Diffusion strategy with decision-making)}
\begin{algorithmic}
\State For each node $k$, initialize $\boldsymbol{w}_{k,-1}=0$,
$\hat{\boldsymbol{h}}_{k,-1}=0$, $\boldsymbol{b}_{k,-1}(l)=0.5$, and
$\boldsymbol{g}_{k,-1}(k)=1$.

\For {$i\geq 0$ and $k=1$ to $N$}

\State 1) Perform an adaptation step using the local data
$\{\boldsymbol{d}_k(i), \boldsymbol{u}_{k,i}\}$:
\begin{align*}
    \boldsymbol{\psi}_{k,i} = \boldsymbol{w}_{k,i-1}+
    \mu\boldsymbol{u}_{k,i}^T
    [\boldsymbol{d}_k(i)-\boldsymbol{u}_{k,i}\boldsymbol{w}_{k,i-1}].
\end{align*}
\State 2) Exchange the vectors
$\{\boldsymbol{\psi}_{k,i},\boldsymbol{w}_{k,i-1}\}$ with
neighbors and update the \\
\hskip\algorithmicindent $\hspace{1.2em}$average update vectors
$\{\hat{\boldsymbol{h}}_{l,i}\}$ for $l\in\mathcal{N}_k$:
\begin{align*}
    \hat{\boldsymbol{h}}_{l,i}=(1-\nu)\hat{\boldsymbol{h}}_{l,i-1}+
    \nu \mu^{-1}(\boldsymbol{\psi}_{l,i}-\boldsymbol{w}_{l,i-1}).
\end{align*}
\State 3) Update the beliefs $\{\boldsymbol{b}_{k,i}(l)\}$ for
$l\in\mathcal{N}_k\setminus\{k\}$:
\begin{equation*}
    \boldsymbol{b}_{k,i}(l)=
    \begin{cases}
    \alpha \boldsymbol{b}_{k,i-1}(l) + (1-\alpha), &\text{if $E_1$}\\
    \alpha \boldsymbol{b}_{k,i-1}(l), &\text{if $E_1^c$}
    \end{cases}
\end{equation*}
\hskip\algorithmicindent $\hspace{1.2em}$where $E_1$ and $E_1^c$ are
defined in (\ref{eq116})-(\ref{eq118}).

\State 4) Identify the observed models
$\{\hat{\boldsymbol{f}}_{k,i}(l)\}$ for
$l\in\mathcal{N}_k\setminus\{k\}$:
\begin{equation*}
    \hat{\boldsymbol{f}}_{k,i}(l)=
    \begin{cases}
    1, &\text{if $\boldsymbol{b}_{k,i}(l) \geq 0.5$}\\
    0, &\text{otherwise}
    \end{cases}.
\end{equation*}
\State 5) Collect the desired models $\{\boldsymbol{g}_{k,i-1}(l)\}$
for
$l\in\mathcal{N}_k\setminus\{k\}$ and \\
\hskip\algorithmicindent $\hspace{1.2em}$construct the set
$\mathcal{N}^g_{k,i-1}$ as follows:
\begin{align*}
    \boldsymbol{g}_{k,i-1}(l) &=
    \begin{cases}
    \boldsymbol{g}_{l,i-1}(l), &\text{if $\hat{\boldsymbol{f}}_{k,i}(l) = 1$}\\
    1-\boldsymbol{g}_{l,i-1}(l), &\text{otherwise}
    \end{cases}\\
    \mathcal{N}_{k,i-1}^g &=
    \{l\;|\;l\in\mathcal{N}_k,\boldsymbol{g}_{k,i-1}(l)=\boldsymbol{g}_{k,i-1}(k)\}.
\end{align*}
\State 6) Update the desired model $\boldsymbol{g}_{k,i}(k)$:
\begin{equation*}\label{eq19}
    \boldsymbol{g}_{k,i}(k) = \begin{cases}
    \boldsymbol{g}_{k,i-1}(k), &\text{w.p. $q_{k,i-1}$}\\
    1-\boldsymbol{g}_{k,i-1}(k), &\text{w.p. $1-q_{k,i-1}$}
    \end{cases}
\end{equation*}
\hskip\algorithmicindent $\hspace{1.2em}$where the probability
$q_{k,i-1}$ is defined in (\ref{eq57}).

\State 7) Adjust the combination weights $\{a^{(1)}_{l,k}\}$ and
$\{a^{(2)}_{l,k}\}$:
\begin{align*}
    & a^{(1)}_{l,k,i} =
    \begin{cases}
    a_{l,k}, &\text{if $l\in\mathcal{N}_k$ and
    $\hat{\boldsymbol{f}}_{k,i}(l)= \boldsymbol{g}_{k,i}(k)$}\\
    0, &\text{otherwise}
    \end{cases}\\
    & a^{(2)}_{l,k,i} =
    \begin{cases}
    a_{l,k}, &\text{if $l\in\mathcal{N}_k$ and
    $\hat{\boldsymbol{f}}_{k,i}(l)\neq \boldsymbol{g}_{k,i}(k)$}\\
    0, &\text{otherwise}
    \end{cases}.
\end{align*}
\State 8) Perform the combination step:
\begin{align*}
    \boldsymbol{w}_{k,i}= \sum_{l\in\mathcal{N}_k}\left(a^{(1)}_{l,k,i}
    \boldsymbol{\psi}_{l,i}
    +a^{(2)}_{l,k,i}\boldsymbol{w}_{l,i-1}\right).
\end{align*}
\EndFor
\end{algorithmic}}
\end{algorithm}

\begin{table}
\centering \caption{{\rm Comparison of the number of multiplications
and additions per iteration, as well as the number of scalars that
are exchanged for each iteration of the algorithms at every node
$k$. In the table, the symbol $n_k$ denotes the degree of node $k$,
i.e., the size of its neighborhood $\mathcal{N}_k$.}}
\begin{tabular}{|c|c|c|}
\hline & \textbf{Diffusion} (\ref{eq49})-(\ref{eq1}) &
\textbf{Modified
diffusion}  \\
\hline\hline \textbf{Multiplications} &
    $(n_k+2)M$ & $(3n_k+2)M+n_k-1$ \\
\hline \textbf{Additions} &
    $(n_k+1)M$ & $(3n_k+1)M+n_k-1$ \\
\hline \textbf{Scalar exchanges} &
    $n_kM$ & $n_k(2M+1)$ \\
\hline
\end{tabular}
\end{table}

Due to the dependency among the steps of the algorithm, the analysis
of its behavior becomes challenging. However, by examining the
various steps, some useful observations stand out. Specifically, it
is observed that the convergence of the algorithm occurs in three
phases as follows (see also Sec. VIII):
\begin{enumerate}
\item
Convergence of the classification scheme: The first phase of
convergence happens during the initial stages of adaptation. It is
natural to expect that during this stage, all weight estimates are
generally away from their respective models and the nodes operate in
the far-field regime. Then, the nodes use steps 2)-5) to determine
the observed models $\{\hat{\boldsymbol{f}}_{k,i}(l)\}$ of their
neighbors. We explain later in Eq. (\ref{eq99}) in Theorem
\ref{thm5} that this construction is able to identify the observed
models with high probability. In other words, the classification
scheme is able to converge reasonably well and fast during the
initial stages of adaptation.

\item
Convergence of the decision-making process: The second phase of
convergence happens right after the convergence of the
classification scheme, once the $\{\hat{\boldsymbol{f}}_{k,i}(l)\}$
have converged. Because the nodes now have correct information about
their neighbor's observed models, they use steps 5)-6) to determine
their own desired models $\{\boldsymbol{g}_{k,i}(k)\}$. The
convergence of this step is ensured by Eq. (\ref{eq127}) in Theorem
\ref{thm1}.

\item Convergence of the diffusion strategy: After the classification
and decision-making processes converge, the estimates
$\{\hat{\boldsymbol{f}}_{k,i}(l),\boldsymbol{g}_{k,i}(l)\}$ remain
largely invariant and the combination weights in step 7) therefore
remain fixed for all practical purposes. Then, the diffusion
strategy becomes unbiased and converges in the mean according to
Theorem \ref{thm2}. Moreover, when the estimates are close to
steady-state, those nodes whose observed models are the same as the
desired model enter the near-field regime and they stop updating
their belief vectors (this will be justified by the future result
(\ref{eq74})).
\end{enumerate}

\section{Performance of Classification Procedure}\label{sec3}
It is clear that the success of the diffusion strategy and
decision-making process depends on the reliability of the
classification scheme in (\ref{eq37}) and (\ref{eq16}). In this
section, we examine the probability of error for the classification
scheme under some simplifying conditions to facilitate the analysis.
This is a challenging task to pursue due to the stochastic nature of
the classification and decision-making process, and due to the
coupling among the agents. Our purpose in this section is to gain
some insights into this process through a first-order approximate
analysis.

Now, there are two types of error. When nodes $k$ and $l$ are
subject to the same observed model (i.e., $z^\circ_k=z^\circ_l$ and
$f_{k}(l)=1$), then one probability of error is defined as:
\begin{align}
    P_{e,1}&=\Pr\left(\hat{\boldsymbol{f}}_{k,i}(l)=0
    \;|\;f_{k}(l)=1\right)\notag \\
    &=\Pr\left(\boldsymbol{b}_{k,i}(l)<0.5
    \;|\;z^\circ_k=z^\circ_l\right)\label{eq77}
\end{align}
where we used rule (\ref{eq37}). The second type of probability of
error occurs when both nodes have different observed models (i.e.,
when $z^\circ_k\neq z^\circ_l$ and $f_{k}(l)=0$) and refers to the
case:
\begin{align}
    P_{e,0}&=\Pr\left(\hat{\boldsymbol{f}}_{k,i}(l) = 1
    \;|\;f_{k}(l)=0\right)\notag \\
    &=\Pr\left(\boldsymbol{b}_{k,i}(l) >
    0.5\;|\;z^\circ_k\neq z^\circ_l\right).\label{eq92}
\end{align}
To evaluate the error probabilities in (\ref{eq77})-(\ref{eq92}), we
examine the probability distribution of the belief variable
$\boldsymbol{b}_{k,i}$. Note from (\ref{eq16}) that the belief
variable can be expressed as:
\begin{equation}\label{eq46}
    \boldsymbol{b}_{k,i}(l) = \alpha \boldsymbol{b}_{k,i-1}(l)
    +(1-\alpha)\boldsymbol{\xi}_{k,i}(l)
\end{equation}
where $\boldsymbol{\xi}_{k,i}(l)$ is a Bernoulli random variable
with
\begin{equation} \label{eq107}
    \boldsymbol{\xi}_{k,i}(l) =
    \begin{cases}
    1, &\text{with probability $p$}\\
    0, &\text{with probability $1-p$}
    \end{cases}.
\end{equation}
The value of $p$ depends on whether the nodes have the same observed
models or not. When $z^\circ_k=z^\circ_l$, the belief
$\boldsymbol{b}_{k,i}(l)$ is supposed to be increased and the
probability of detection, $P_d$, characterizes the probability that
$\boldsymbol{b}_{k,i}(l)$ is increased, i.e.,
\begin{equation}\label{eq100}
    P_d = \Pr(\boldsymbol{\xi}_{k,i}(l)=1
    \;|\; z^\circ_k=z^\circ_l).
\end{equation}
In this case, the probability $p$ in (\ref{eq107}) will be replaced
by $P_d$. On the other hand, when $z^\circ_k\neq z^\circ_l$, the
probability of false alarm, $P_f$, characterizes the probability
that the belief $\boldsymbol{b}_{k,i}(l)$ is increased when it is
supposed to be decreased, i.e.,
\begin{equation}\label{eq101}
    P_f = \Pr(\boldsymbol{\xi}_{k,i}(l)=1
    \;|\; z^\circ_k\neq z^\circ_l)
\end{equation}
and we replace $p$ in (\ref{eq107}) by $P_f$. We will show later
(see Lemma \ref{lem6}) how to evaluate the two probabilities $P_d$
and $P_f$. In the sequel we denote them generically by $p$.

Expanding (\ref{eq46}), we obtain
\begin{align}\label{eq71}
    \boldsymbol{b}_{k,i}(l)=\alpha^{i+1}\boldsymbol{b}_{k,-1}(l)+
    (1-\alpha)\sum_{j=0}^{i}\alpha^j\boldsymbol{\xi}_{k,i-j}(l).
\end{align}
Although it is generally not true, we simplify the analysis by
assuming that the $\{\boldsymbol{\xi}_{k,i}(l)\}$ in (\ref{eq107})
are independent and identically distributed (i.i.d.) random
variables. This assumption is motivated by conditions
(\ref{eq116})-(\ref{eq118}) and by the fact that the type of model
that is observed by node $k$ is assumed to be independent from the
type of model that is observed by node $l$. The assumption is also
motivated by the fact that the regression data and noise across all
nodes are assumed to be temporally white and independent over space.
Now, since $\alpha$ is a design parameter that is smaller than one,
after a few iterations, say, $C$ iterations, the influence of the
initial condition in (\ref{eq71}) becomes small and can be ignored.
In addition, the distribution of $\boldsymbol{b}_{k,i}(l)$ can be
approximated by the distribution of the following random variable,
which takes the form of a random geometric series:
\begin{equation}\label{eq117}
    \boldsymbol{\zeta}_k(l) \triangleq(1-\alpha)
    \sum_{j=0}^C\alpha^j\boldsymbol{\xi}_{k,j}(l)
\end{equation}
where we replaced the index $i-j$ in (\ref{eq71}) by $j$ because the
$\{\boldsymbol{\xi}_{k,i}(l)\}$ are assumed to be i.i.d. There have
been several useful works on the distribution function of random
geometric sequences and series \cite{Hill73,Smith91,Bovier93}.
However, it is generally untractable to express the distribution
function in closed form. We instead resort to the following two
inequalities to establish bounds for the error probabilities
(\ref{eq77})-(\ref{eq92}). First, for any two generic events $E_1$
and $E_2$, if $E_1$ implies $E_2$, then the probability of event
$E_1$ is less than the probability of event $E_2$ \cite{Papoulis02},
i.e.,
\begin{equation}\label{eq60}
    \Pr(E_1)\leq \Pr(E_2)\quad\text{if}\quad E_1\subseteq E_2.
\end{equation}
The second inequality is the Markov inequality \cite{Papoulis02},
i.e., for any nonnegative random variable $\boldsymbol{x}$ and
positive scalar $\delta$, it holds that
\begin{equation}\label{eq61}
    \Pr(\boldsymbol{x} \geq \delta)=\Pr(\boldsymbol{x}^2 \geq \delta^2)
    \leq \frac{\mathbb{E}\boldsymbol{x}^2}{\delta^2}.
\end{equation}
To apply the Markov inequality (\ref{eq61}), we need the
second-order moment of $\boldsymbol{\zeta}_k(l)$ in (\ref{eq117}),
which is difficult to evaluate because the
$\{\boldsymbol{\xi}_{k,j}(l)\}$ are not zero mean. To circumvent
this difficulty, we introduce the change of variable:
\begin{equation}
    \boldsymbol{\xi}^\circ_{k,j}(l) \triangleq
    \frac{\boldsymbol{\xi}_{k,j}(l)-p}{\sqrt{p(1-p)}}.
\end{equation}
It can be verified that the $\{\boldsymbol{\xi}^\circ_{k,j}(l)\}$
are i.i.d. with \emph{zero} mean and unit variance. Then, we can
write (\ref{eq117}) as
\begin{align}\label{eq93}
    \boldsymbol{\zeta}_k(l)
    = p\left(1-\alpha^{C+1}\right) + \sqrt{p(1-p)}\boldsymbol{\zeta}^\circ_k(l)
\end{align}
where the variable $\boldsymbol{\zeta}^\circ_k(l)$ is defined by
\begin{equation}
    \boldsymbol{\zeta}^\circ_k(l) \triangleq (1-\alpha)\sum_{j=0}^C\alpha^j
    \boldsymbol{\xi}^\circ_{k,j}(l)
\end{equation}
and its mean is zero and its variance is given by
\begin{align}
    \mathbb{E}(\boldsymbol{\zeta}^\circ_k(l))^2
    =\frac{1-\alpha}{1+\alpha}\left(1-\alpha^{2(C+1)}\right)
    \approx\frac{1-\alpha}{1+\alpha}.
\end{align}
Then, from (\ref{eq77}) and (\ref{eq93}) and replacing the
probability $p$ by $P_d$ and for $C$ large enough so that
$1-\alpha^{C+1}\approx 1$, we obtain that
\begin{align}
    P_{e,1}
    & \approx \Pr(\boldsymbol{\zeta}_k(l) < 0.5\;|\; z^\circ_k=z^\circ_l) \notag\\
    & = \Pr\left(\boldsymbol{\zeta}^\circ_k(l) < \frac{-(P_d-0.5)}
    {\sqrt{P_d(1-P_d)}}\;|\;z^\circ_k=z^\circ_l\right) \notag \\
    & \leq \Pr\left(|\boldsymbol{\zeta}^\circ_k(l)| > \frac{P_d-0.5}
    {\sqrt{P_d(1-P_d)}}\;|\;z^\circ_k=z^\circ_l\right) \notag \\
    & \leq \frac{1-\alpha}{1+\alpha}\cdot
    \frac{P_d(1-P_d)}{(P_d-0.5)^2}\label{eq78}
\end{align}
where we used (\ref{eq60}) and the Markov inequality (\ref{eq61}) in
the last two inequalities. Note that in (\ref{eq78}), we assume the
value of $P_d$ to be greater than $0.5$. Indeed, as we will argue in
Lemma \ref{lem6}, the value of $P_d$ is close to one. Similarly,
replacing the probability $p$ by $P_f$ and assuming that $P_f<0.5$,
we obtain from (\ref{eq92}) and (\ref{eq93}) that
\begin{align}
    P_{e,0}\leq \frac{1-\alpha}{1+\alpha}\cdot
    \frac{P_f(1-P_f)}{(0.5-P_f)^2}.\label{eq79}
\end{align}
To evaluate the upper bounds in (\ref{eq78})-(\ref{eq79}), we need
the probabilities of detection and false alarm in
(\ref{eq100})-(\ref{eq101}). Since the update of
$\boldsymbol{b}_{k,i}(l)$ in (\ref{eq16}) depends on
$\{\hat{\boldsymbol{h}}_{k,i},\hat{\boldsymbol{h}}_{l,i}\}$, we need
to rely on the statistical properties of these latter quantities. In
the following, we first examine the statistics of
$\hat{\boldsymbol{h}}_{k,i}$ constructed via (\ref{eq18}) and then
evaluate $P_d$ and $P_f$ defined by (\ref{eq100})-(\ref{eq101}).

\subsection{Statistics of $\hat{\boldsymbol{h}}_{k,i}$}
We first discuss some assumptions that lead to an approximate model
for the evolution of $\hat{\boldsymbol{h}}_{k,i}$ in (\ref{eq75})
further ahead. As we mentioned following (\ref{eq18}), since the
step-sizes $\{\mu,\nu\}$ satisfy $\mu\ll \nu$, the variation of
$\boldsymbol{w}_{k,i-1}$ can be assumed to be much slower than the
variation of $\hat{\boldsymbol{h}}_{k,i}$. For this reason, the
analysis in this section will be conditioned on
$\boldsymbol{w}_{k,i-1}=w_{k,i-1}$, as we did in (\ref{eq21}), and
we introduce the following assumption.
\begin{assumption}[Small step-size]\label{ass2}
The step-sizes $\{\mu,\nu\}$ are sufficiently small, i.e.,
\begin{equation} \label{eq70}
    0 < \mu \ll \nu \ll 1
\end{equation}
so that $w_{k,i} \approx w_{k,i-1}$ for all $k$.
\end{assumption}
\noindent In addition, since the update vector from (\ref{eq21})
depends on the covariance matrix $R_u$, we assume $R_u$ is
well-conditioned so that the following is justified.
\begin{assumption}[Regression model]\label{ass5}
The regression covariance matrix $R_{u}$ is well-conditioned such
that it holds that
\begin{align}\label{eq72}
    &\text{if }\|z^\circ_k-w_{k,i-1}\| \gg 1\text{, then }
    \|\bar{h}_{k,i}\| \gg \eta\\
    &\text{if }\|z^\circ_k-w_{k,i-1}\| \ll 1\text{, then }
    \|\bar{h}_{k,i}\| \ll \eta.\label{eq73}
\end{align}
Moreover, the fourth-order moment of the regression data
$\{\boldsymbol{u}_{k,i}\}$ is assumed to be bounded such that
\begin{equation}\label{eq108}
    \nu \tau \ll 1
\end{equation}
where the scalar $\tau$ is a bound for
\begin{equation}\label{eq102}
    \frac{
    \mathbb{E}\|\boldsymbol{u}_{k,i}^T\boldsymbol{u}_{k,i}
    (z_k^\circ-w_{k,i-1})-\bar{h}_{k,i}\|^2}{\|\bar{h}_{k,i}\|^2}
    \leq \tau
\end{equation}
and its value measures the randomness in variables involving
fourth-order products of entries of $\boldsymbol{u}_{k,i}$.
\end{assumption}
\noindent Note that condition (\ref{eq102}) can be rewritten as
\begin{equation}
    \frac{(z_k^\circ-w_{k,i-1})^T\mathbb{E}(\boldsymbol{u}_{k,i}^T\boldsymbol{u}_{k,i}
    \boldsymbol{u}_{k,i}^T\boldsymbol{u}_{k,i}-R_u^2)(z_k^\circ-w_{k,i-1})}
    {(z_k^\circ-w_{k,i-1})^TR_u^2(z_k^\circ-w_{k,i-1})}\leq \tau
\end{equation}
which shows that (\ref{eq102}) corresponds to a condition on the
fourth-order moment of the regression data. Combining conditions
(\ref{eq70}) and (\ref{eq108}), we obtain the following constraint
on the step-sizes $\{\mu,\nu\}$:
\begin{equation}\label{eq126}
    0 \ll \mu \ll \nu \ll \min\{1,1/\tau\}.
\end{equation}
To explain more clearly what conditions (\ref{eq72})-(\ref{eq73})
entail, we obtain from (\ref{eq21}) that $\|\bar{h}_{k,i}\|^2$ can
be written as the weighted square Euclidean norm:
\begin{equation}\label{eq76}
    \|\bar{h}_{k,i}\|^2 = \|z^\circ_k-w_{k,i-1}\|^2_{R_u^2}.
\end{equation}
We apply the Rayleigh-Ritz characterization of eigenvalues
\cite{Horn85} to conclude that
\begin{equation*}
    \lambda_{\min}(R_u) \cdot\|z^\circ_k-w_{k,i-1}\|
    \leq \|\bar{h}_{k,i}\| \leq \lambda_{\max}(R_u) \cdot\|z^\circ_k-w_{k,i-1}\|
\end{equation*}
where $\lambda_{\min}(R_u)$ and $\lambda_{\max}(R_u)$ denote the
minimum and maximum eigenvalues of $R_u$. Then, condition
(\ref{eq72}) indicates that whenever node $k$ is operating in the
far-field regime, i.e., whenever $\|z^\circ_k-w_{k,i-1}\|\gg 1$,
then we would like
\begin{equation}
    \lambda_{\min}(R_u) \cdot\|z^\circ_k-w_{k,i-1}\| \gg \eta.
\end{equation}
Likewise, whenever $\|z_k^\circ-w_{k,i-1}\|\ll 1$, then
\begin{equation}
    \lambda_{\max}(R_u) \cdot\|z^\circ_k-w_{k,i-1}\| \ll \eta.
\end{equation}
Therefore, the scalars $\lambda_{\min}(R_u)/\eta$ and
$\lambda_{\max}(R_u)/\eta$ cannot be too small or too large, i.e.,
the matrix $R_u$ should be well-conditioned.

We are now ready to model the average update vector
$\hat{\boldsymbol{h}}_{k,i}$. From Assumption \ref{ass2}, since the
estimate $w_{k,i-1}$ remains approximately constant during repeated
updates of $\hat{\boldsymbol{h}}_{k,i}$, we first remove the time
index in $w_{k,i-1}$ and examine the statistics of
$\hat{\boldsymbol{h}}_{k,i}$ under the condition $w_{k,i-1} = w_k$.
From (\ref{eq36}) and (\ref{eq18}), the expected value of
$\hat{\boldsymbol{h}}_{k,i}$ given $w_{k,i-1} = w_k$ converges to
\begin{equation}\label{eq63}
    \lim_{i\rightarrow \infty}\mathbb{E}\hat{\boldsymbol{h}}_{k,i} =
    R_{u}(z_k^\circ-w_k)
    \triangleq \bar{h}_k.
\end{equation}
We can also obtain from (\ref{eq36}) and (\ref{eq18}) that the
limiting second-order moment of $\hat{\boldsymbol{h}}_{k,i}$, which
is denoted by $\sigma^2_{\hat{h},k}$, satisfies:
\begin{align} \label{eq39}
    \sigma^2_{\hat{h},k}\triangleq \lim_{i\rightarrow \infty}
    \mathbb{E}\|\hat{\boldsymbol{h}}_{k,i}-\bar{h}_k\|^2
    = (1-\nu)^2 \sigma^2_{\hat{h},k}+ \nu^2 \sigma^2_{h,k}
\end{align}
where $\sigma^2_{h,k}\triangleq
\mathbb{E}\|\boldsymbol{h}_{k,i}-\bar{h}_k\|^2$ is given by
\begin{align}
    \sigma^2_{h,k}=\mathbb{E}\|\boldsymbol{u}_{k,i}^T\boldsymbol{u}_{k,i}
    (z_k^\circ-w_k)-\bar{h}_k\|^2+
    \sigma^2_{v,k}\text{Tr}(R_{u}). \label{eq33}
\end{align}
Note that the cross term on the right-hand side of (\ref{eq39}) is
zero because the terms $\hat{\boldsymbol{h}}_{k,i-1}-\bar{h}_k$ and
$\boldsymbol{h}_{k,i}-\bar{h}_k$ are independent under the constant
$w_k$ condition. Note also that $\boldsymbol{h}_{k,i}-\bar{h}_k$ has
zero mean. Then, from (\ref{eq39}) and Assumption \ref{ass2}, the
variance $\sigma^2_{\hat{h},k}$ is given by
\begin{align}\label{eq59}
    \sigma^2_{\hat{h},k}=\frac{\nu}{2-\nu}\sigma^2_{h,k}
    \approx\frac{\nu}{2}\sigma^2_{h,k}.
\end{align}

Since $w_{k,i-1}$ remains approximately constant, the average update
vector $\hat{\boldsymbol{h}}_{k,i}$ has mean and second-order moment
close to expressions (\ref{eq63}) and (\ref{eq59}). We then arrive
at the following approximate model for $\hat{\boldsymbol{h}}_{k,i}$.
\begin{assumption}[Model for
$\hat{\boldsymbol{h}}_{k,i}$]\label{ass6} The estimate
$\hat{\boldsymbol{h}}_{k,i}$ is modeled as:
\begin{equation}\label{eq75}
    \hat{\boldsymbol{h}}_{k,i} = \bar{h}_{k,i} + \boldsymbol{n}_{k,i}
\end{equation}
where $\boldsymbol{n}_{k,i}$ is a random perturbation process with
zero mean and
\begin{equation}\label{eq40}
    \mathbb{E}\|\boldsymbol{n}_{k,i}\|^2 \leq
    \frac{\nu[\tau\|\bar{h}_{k,i}\|^2+\sigma^2_{v,k}\mathrm{Tr}(R_{u})]}{2}
\end{equation}
with the scalar $\tau$ defined by (\ref{eq108}).
\end{assumption}
\noindent Note that since the perturbation $\boldsymbol{n}_{k,i}$ is
from the randomness of the regressor and noise processes
$\{\boldsymbol{u}_{k,i},\boldsymbol{v}_{k}(i)\}$, then it is
reasonable to assume that the $\{\boldsymbol{n}_{k,i}\}$ are
independent of each other.

Before we proceed to the probability of detection (\ref{eq100}) and
the probability of false alarm (\ref{eq101}), we note that the
update of the belief $\boldsymbol{b}_{k,i}(l)$ happens only when
both nodes $k$ and $l$ are in the far-field regime, which is
determined by the magnitudes of $\hat{\boldsymbol{h}}_{k,i}$ and
$\hat{\boldsymbol{h}}_{l,i}$ being greater than the threshold
$\eta$. The following result approximates the probability that a
node is classified to be in the far-field.
\begin{lem}\label{lem5}
Under Assumptions \ref{ass2}-\ref{ass6}, it holds that
\begin{align}\label{eq62}
    \Pr(\|\hat{\boldsymbol{h}}_{k,i}\| > \eta\;|\;
    \|z^\circ_k-w_{k,i-1}\| \gg 1) &\geq 1-\frac{\nu \tau}{2} \\
    \Pr(\|\hat{\boldsymbol{h}}_{k,i}\| > \eta\;|\;
    \|z^\circ_k-w_{k,i-1}\| \ll 1) &\leq
    \frac{\nu \sigma^2_{v,k}\mathrm{Tr}(R_u)}{2\eta^2}. \label{eq74}
\end{align}
\end{lem}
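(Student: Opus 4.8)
The plan is to handle each inequality separately by exploiting the additive model $\hat{\boldsymbol{h}}_{k,i} = \bar{h}_{k,i} + \boldsymbol{n}_{k,i}$ furnished by Assumption \ref{ass6}. The strategy in both cases is to relate the random event $\{\|\hat{\boldsymbol{h}}_{k,i}\| > \eta\}$ (or its complement) to a tail event for the perturbation magnitude $\|\boldsymbol{n}_{k,i}\|$ through the triangle inequality, and then to control that tail via the Markov inequality (\ref{eq61}) together with the second-order bound (\ref{eq40}). The deterministic separation between $\|\bar{h}_{k,i}\|$ and the threshold $\eta$, supplied by Assumption \ref{ass5}, is what turns the far-field/near-field conditioning into usable quantitative estimates.

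For the far-field bound (\ref{eq62}), I would bound the complementary event $\{\|\hat{\boldsymbol{h}}_{k,i}\| \leq \eta\}$. By the reverse triangle inequality, $\|\boldsymbol{n}_{k,i}\| = \|\hat{\boldsymbol{h}}_{k,i} - \bar{h}_{k,i}\| \geq \|\bar{h}_{k,i}\| - \|\hat{\boldsymbol{h}}_{k,i}\|$, so on this event $\|\boldsymbol{n}_{k,i}\| \geq \|\bar{h}_{k,i}\| - \eta$. Using the subset inequality (\ref{eq60}) and then Markov (\ref{eq61}) gives $\Pr(\|\hat{\boldsymbol{h}}_{k,i}\| \leq \eta) \leq \mathbb{E}\|\boldsymbol{n}_{k,i}\|^2 / (\|\bar{h}_{k,i}\| - \eta)^2$. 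Under the conditioning $\|z_k^\circ - w_{k,i-1}\| \gg 1$, condition (\ref{eq72}) yields $\|\bar{h}_{k,i}\| \gg \eta$, so $\|\bar{h}_{k,i}\| - \eta \approx \|\bar{h}_{k,i}\|$. Substituting (\ref{eq40}) produces $\frac{\nu\tau}{2} + \frac{\nu\sigma^2_{v,k}\mathrm{Tr}(R_u)}{2\|\bar{h}_{k,i}\|^2}$, in which the second term is negligible because $\|\bar{h}_{k,i}\|^2$ is large in the far-field; taking the complement then delivers (\ref{eq62}).

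For the near-field bound (\ref{eq74}), I would bound the event $\{\|\hat{\boldsymbol{h}}_{k,i}\| > \eta\}$ directly. The triangle inequality gives $\|\hat{\boldsymbol{h}}_{k,i}\| \leq \|\bar{h}_{k,i}\| + \|\boldsymbol{n}_{k,i}\|$, so this event is contained in $\{\|\boldsymbol{n}_{k,i}\| > \eta - \|\bar{h}_{k,i}\|\}$. Under $\|z_k^\circ - w_{k,i-1}\| \ll 1$, condition (\ref{eq73}) forces $\|\bar{h}_{k,i}\| \ll \eta$, hence $\eta - \|\bar{h}_{k,i}\| \approx \eta$. Markov (\ref{eq61}) and (\ref{eq40}) then yield $\frac{\nu[\tau\|\bar{h}_{k,i}\|^2 + \sigma^2_{v,k}\mathrm{Tr}(R_u)]}{2\eta^2}$; since $\|\bar{h}_{k,i}\|$ is tiny in the near-field while the noise contribution $\sigma^2_{v,k}\mathrm{Tr}(R_u)$ persists, the $\tau\|\bar{h}_{k,i}\|^2$ term is negligible relative to the noise floor, which gives (\ref{eq74}).

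The main obstacle is not the chain of inequalities, which is essentially mechanical, but justifying that the approximations $\|\bar{h}_{k,i}\| - \eta \approx \|\bar{h}_{k,i}\|$ and $\eta - \|\bar{h}_{k,i}\| \approx \eta$, as well as the discarding of the respective subdominant terms, are legitimate under the loose $\gg$ and $\ll$ conditioning. This is inherently a first-order approximate analysis, as flagged in the section preamble, and the role of Assumption \ref{ass5} is precisely to convert the physically meaningful regimes (large or small $\|z_k^\circ - w_{k,i-1}\|$) into the quantitative gaps between $\|\bar{h}_{k,i}\|$ and $\eta$ that make these approximations valid.
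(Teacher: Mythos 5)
Your proof is correct and takes essentially the same route as the paper's own argument: the additive model of Assumption \ref{ass6}, the triangle inequality combined with the subset inequality (\ref{eq60}), the Markov inequality (\ref{eq61}) applied to $\|\boldsymbol{n}_{k,i}\|$ with the second-moment bound (\ref{eq40}), and the far-field/near-field conditions (\ref{eq72})--(\ref{eq73}) used to drop $\eta$ (respectively $\|\bar{h}_{k,i}\|$) and the subdominant term. The only difference is presentational: the paper spells out the far-field case and dismisses (\ref{eq74}) with ``similar arguments,'' which your proposal fills in with exactly the intended symmetric argument.
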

\begin{proof}
See Appendix \ref{appA}.
\end{proof}
From Assumptions \ref{ass2}-\ref{ass5}, the probability in
(\ref{eq62}) is close to one and the probability in (\ref{eq74}) is
close to zero. Therefore, this approximate analysis suggests that
during the initial stages of adaptation, the magnitude of
$\{\|\hat{\boldsymbol{h}}_{k,i}\|\}$ successfully determines that
the nodes are in the far-field state and they update the belief
using rule (\ref{eq16}). When the estimates approach steady-state,
the nodes whose observed models are the same as the desired model
satisfy the condition $\|z^\circ_k-w_{k,i-1}\| \ll 1$ and,
therefore, they stop updating their belief vectors in view of
(\ref{eq74}). On the other hand, when both nodes $k$ and $l$ have
observed models that are different from the desired model (and,
therefore, their estimates are away from their observed models),
they will continue to update their beliefs. The proof in Appendix
\ref{appB} then establishes the following bounds on $P_d$ and $P_f$.
\begin{lem}\label{lem6}
Under Assumptions \ref{ass2}-\ref{ass6} and during the far-field
regime (\ref{eq72}), the probabilities of detection and false alarm
defined by (\ref{eq100})-(\ref{eq101}) are approximately bounded by
\begin{align}\label{eq68}
    P_d \geq 1-\frac{16\nu \tau}{\pi^2} \quad \text{and}\quad
    P_f \leq \frac{16\nu \tau}{\pi^2}.
\end{align}
\end{lem}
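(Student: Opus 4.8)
The plan is to translate the Bernoulli events in (\ref{eq100})-(\ref{eq101}) into geometric statements about the noisy vectors $\hat{\boldsymbol{h}}_{k,i},\hat{\boldsymbol{h}}_{l,i}$ and then control the deviation of these vectors from their deterministic directions $\bar{h}_{k,i},\bar{h}_{l,i}$ using the second-moment bound (\ref{eq40}). Recall that $\boldsymbol{\xi}_{k,i}(l)=1$ is precisely event $E_1$ in (\ref{eq116}): both magnitudes exceed $\eta$ and the inner product $\hat{\boldsymbol{h}}_{k,i}^T\hat{\boldsymbol{h}}_{l,i}>0$. Lemma \ref{lem5} already guarantees that, in the far-field, each magnitude condition holds with probability at least $1-\nu\tau/2$, so the dominant contribution to a detection or false-alarm error comes from the sign of the inner product. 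I would therefore reduce both $1-P_d$ and $P_f$ to the probability that the perturbations flip this sign.

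First I would record the deterministic geometry that makes classification correct in the mean. Writing $\hat{\boldsymbol{h}}_{k,i}=\bar{h}_{k,i}+\boldsymbol{n}_{k,i}$ as in (\ref{eq75}), the far-field premise (\ref{eq72}) is used to assert that when $z_k^\circ=z_l^\circ$ the noiseless vectors are co-directed, $\bar{h}_{k,i}^T\bar{h}_{l,i}>0$, whereas when $z_k^\circ\neq z_l^\circ$ they are anti-directed, $\bar{h}_{k,i}^T\bar{h}_{l,i}<0$ (Fig. \ref{Fig_3}). In either case a sign error requires the \emph{total} angular perturbation of the two vectors to exceed $\pi/2$. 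Letting $\phi_k$ denote the angle between $\hat{\boldsymbol{h}}_{k,i}$ and $\bar{h}_{k,i}$ (and similarly $\phi_l$), a sign flip implies $\phi_k+\phi_l>\pi/2$, and hence, by a union bound, that at least one of $\phi_k,\phi_l$ exceeds $\pi/4$.

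The heart of the argument is then to bound $\Pr(\phi_k>\pi/4)$ by the second moment of $\boldsymbol{n}_{k,i}$. Elementary planar geometry gives $\|\boldsymbol{n}_{k,i}\|\geq\|\bar{h}_{k,i}\|\sin\phi_k$, so in the relevant small-to-moderate angle regime $\phi_k>\pi/4$ implies, to first order, $\|\boldsymbol{n}_{k,i}\|\gtrsim(\pi/4)\|\bar{h}_{k,i}\|$. Applying the Markov inequality (\ref{eq61}) together with the far-field form of (\ref{eq40}), in which $\tau\|\bar{h}_{k,i}\|^2$ dominates the noise-floor term $\sigma^2_{v,k}\mathrm{Tr}(R_u)$, yields
\begin{equation*}
    \Pr(\phi_k>\pi/4)\;\lesssim\;\frac{\mathbb{E}\|\boldsymbol{n}_{k,i}\|^2}{(\pi/4)^2\|\bar{h}_{k,i}\|^2}\;\leq\;\frac{\nu\tau\|\bar{h}_{k,i}\|^2/2}{(\pi^2/16)\|\bar{h}_{k,i}\|^2}\;=\;\frac{8\nu\tau}{\pi^2}.
\end{equation*}
Summing the two symmetric contributions for nodes $k$ and $l$ gives the bounds $1-P_d\lesssim 16\nu\tau/\pi^2$ and $P_f\lesssim 16\nu\tau/\pi^2$ in (\ref{eq68}), where the independence of $\boldsymbol{n}_{k,i}$ and $\boldsymbol{n}_{l,i}$ (assumed after (\ref{eq40})) legitimizes treating the two angular events separately.

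The main obstacle is the geometric reduction rather than the probabilistic tail bound: one must justify that the far-field deterministic vectors really do carry the correct-sign inner product with an angular margin of $\pi/2$, so that the problem collapses to the clean $\phi_k+\phi_l>\pi/2$ criterion, and one must argue that the magnitude conditions in $E_1$ contribute only lower-order terms that are absorbed into the stated approximation. The quadratic cross-term $\boldsymbol{n}_{k,i}^T\boldsymbol{n}_{l,i}$, which would complicate a direct expansion $\hat{\boldsymbol{h}}_{k,i}^T\hat{\boldsymbol{h}}_{l,i}=\bar{h}_{k,i}^T\bar{h}_{l,i}+\bar{h}_{k,i}^T\boldsymbol{n}_{l,i}+\boldsymbol{n}_{k,i}^T\bar{h}_{l,i}+\boldsymbol{n}_{k,i}^T\boldsymbol{n}_{l,i}$, is sidestepped entirely by working with angles and the per-node Markov bound; verifying that this angular route is valid, and that the small-angle linearization $\sin\phi\approx\phi$ is precisely what produces the $\pi/4$ threshold and hence the $16/\pi^2$ constant, is the step I would scrutinize most carefully.
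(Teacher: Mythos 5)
Your proposal follows essentially the same route as the paper's proof in Appendix \ref{appB}: invoke Lemma \ref{lem5} to discard the magnitude conditions in $E_1$, reduce the error events to a sign flip of $\hat{\boldsymbol{h}}_{k,i}^T\hat{\boldsymbol{h}}_{l,i}$, convert the sign flip into angular deviations bounded via $\|\boldsymbol{n}_{k,i}\|/\|\bar{h}_{k,i}\|$ (the paper's Eq.~(\ref{eq42})), split the two-node event at threshold $\pi/4$ per node (the paper's inequality (\ref{eq80})), and finish with Markov's inequality (\ref{eq61}) and the far-field simplification of (\ref{eq40}) in which the $\sigma^2_{v,k}\mathrm{Tr}(R_u)$ term is neglected (the paper's Eq.~(\ref{eq66})). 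The arithmetic and the resulting constant $16/\pi^2$ coincide exactly.

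There is, however, one genuine gap, and it is precisely the step you flagged at the end but did not close. Your stated premise --- that for a common model the mean vectors are merely co-directed, $\bar{h}_{k,i}^T\bar{h}_{l,i}>0$, and for distinct models anti-directed --- is not strong enough to support the claim that ``a sign error requires the total angular perturbation to exceed $\pi/2$.'' If the noiseless vectors subtend a nonzero angle $\theta_0$, the margin shrinks to $\pi/2-\theta_0$ in the detection case (and grows only from $\theta_0-\pi/2$ in the false-alarm case), so the per-node $\pi/4$ threshold, and hence the constant $16/\pi^2$, no longer follows. The paper closes this by establishing approximate \emph{exact} alignment rather than a sign condition: since all nodes are initialized at $w_{k,-1}=0$, the estimates $\{w_{k,i-1}\}$ remain close to one another during the initial far-field stage, so $\|w_{k,i-1}-w_{l,i-1}\|\ll\|z^\circ_k-w_{k,i-1}\|$ and therefore $\bar{h}_{l,i}=R_u(z^\circ_l-w_{l,i-1})\approx R_u(z^\circ_k-w_{k,i-1})=\bar{h}_{k,i}$ when $z^\circ_k=z^\circ_l$; and for $P_f$, both mean vectors emanate from approximately the same point $w_{k,i-1}$ toward the two distinct models, with Lemma \ref{lem3} (the estimates of nodes that mix data hover near a convex combination of the two models, i.e., near the segment joining them) invoked to argue that the angle between $\bar{h}_{k,i}$ and $\bar{h}_{l,i}$ is close to $\pi$, not merely obtuse. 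Without these two ingredients --- the common-initialization argument and the appeal to Lemma \ref{lem3} --- the clean $\phi_k+\phi_l>\pi/2$ criterion on which your entire tail bound rests is unsupported.
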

\noindent The above result establishes that the probability of
detection is close to one and the probability of false alarm is
close to zero in view of $\nu \tau \ll 1$. That is, with high
probability, node $k$ will correctly adjust the value of
$\boldsymbol{b}_{k,i}(l)$. We then arrive at the following bound for
error probabilities in (\ref{eq77})-(\ref{eq92}).
\begin{thm}\label{thm5}
Under Assumptions \ref{ass2}-\ref{ass6} and in the far-field regime
(\ref{eq72}), the error probabilities $\{P_{e,1},P_{e,0}\}$ are
approximately upper bounded by
\begin{equation}\label{eq99}
    P_u=\frac{1-\alpha}{1+\alpha}\cdot\frac{16\nu \tau}{\pi^2}
    \cdot \frac{1-16\nu \tau/\pi^2}{(1/2-16\nu \tau/\pi^2)^2}
    =\mathcal{O}(\nu).
\end{equation}
\end{thm}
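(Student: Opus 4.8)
The plan is to combine the Markov-inequality bounds already derived in (\ref{eq78}) and (\ref{eq79}) with the estimates on $P_d$ and $P_f$ furnished by Lemma \ref{lem6}, and then to invoke a short monotonicity argument to replace $P_d$ and $P_f$ by their extreme admissible values. I first introduce the shorthand $\epsilon \triangleq 16\nu\tau/\pi^2$, so that Lemma \ref{lem6} reads $P_d \geq 1-\epsilon$ and $P_f \leq \epsilon$. Since $\nu\tau \ll 1$ by Assumption \ref{ass5}, we have $\epsilon \ll 1 < 1/2$; in particular $P_d > 1/2$ and $P_f < 1/2$, which is exactly the regime in which the bounds (\ref{eq78})-(\ref{eq79}) were derived.

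Next I would treat $P_{e,1}$. Setting $x = P_d - 1/2 \in (0,1/2]$, the factor multiplying $(1-\alpha)/(1+\alpha)$ in (\ref{eq78}) becomes $\frac{P_d(1-P_d)}{(P_d-1/2)^2} = \frac{1/4 - x^2}{x^2} = \frac{1}{4x^2}-1$, which is strictly decreasing in $x$ and hence strictly decreasing in $P_d$ on $(1/2,1]$. Because Lemma \ref{lem6} supplies only the lower bound $P_d \geq 1-\epsilon$, the right-hand side of (\ref{eq78}) is largest at the smallest admissible value $P_d = 1-\epsilon$. Substituting this value gives $P_{e,1} \leq \frac{1-\alpha}{1+\alpha}\cdot \frac{\epsilon(1-\epsilon)}{(1/2-\epsilon)^2} = P_u$.

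The argument for $P_{e,0}$ is symmetric. Writing $y = 1/2 - P_f \in (0,1/2]$, the factor in (\ref{eq79}) becomes $\frac{P_f(1-P_f)}{(1/2-P_f)^2} = \frac{1/4-y^2}{y^2} = \frac{1}{4y^2}-1$, which is decreasing in $y$ and therefore increasing in $P_f$ on $[0,1/2)$. Since here we know only the upper bound $P_f \leq \epsilon$, the right-hand side of (\ref{eq79}) is largest at the largest admissible value $P_f = \epsilon$, and substitution again yields $P_{e,0} \leq \frac{1-\alpha}{1+\alpha}\cdot \frac{\epsilon(1-\epsilon)}{(1/2-\epsilon)^2} = P_u$. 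Recalling $\epsilon = 16\nu\tau/\pi^2$ reproduces the stated closed form of $P_u$. Finally, treating $\alpha$ and $\tau$ as fixed constants and letting $\nu \to 0$, we have $\epsilon = \mathcal{O}(\nu)$ while $\frac{1-\epsilon}{(1/2-\epsilon)^2} \to 4$, so that $P_u = \frac{1-\alpha}{1+\alpha}\cdot 4\epsilon + o(\nu) = \mathcal{O}(\nu)$.

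The only delicate point I anticipate is the bookkeeping about the direction of each bound: because Lemma \ref{lem6} gives a one-sided estimate for each of $P_d$ and $P_f$, I must check that the corresponding factor is monotone in the direction that makes the known one-sided bound the worst case (decreasing in $P_d$, increasing in $P_f$), and that the extremal values $1-\epsilon$ and $\epsilon$ still lie in the valid range $P_d>1/2$, $P_f<1/2$ where (\ref{eq78})-(\ref{eq79}) hold. Both requirements follow at once from $\epsilon \ll 1/2$, so beyond this sign analysis the proof is essentially a direct substitution.
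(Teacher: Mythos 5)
Your proof is correct and takes essentially the same route as the paper's: both arguments exploit the monotonicity of $p(1-p)/(p-1/2)^2$ on either side of $p=1/2$ to justify replacing $P_d$ and $P_f$ in (\ref{eq78})--(\ref{eq79}) by their extremal admissible values $1-16\nu\tau/\pi^2$ and $16\nu\tau/\pi^2$ from Lemma \ref{lem6}. Your substitutions $x=P_d-1/2$ and $y=1/2-P_f$ simply verify the monotonicity claim that the paper asserts without proof, and your closing limit computation makes the $\mathcal{O}(\nu)$ statement explicit where the paper leaves it implicit.
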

\begin{proof}
Let the function $f(p)$ be defined as $p(1-p)/(p-0.5)^2$. It can be
verified that the function $f(p)$ is strictly increasing when
$p\in[0,0.5)$ and strictly decreasing when $p\in(0.5,1]$. From Lemma
\ref{lem6}, we conclude that $P_d>0.5$ and $P_f<0.5$. Therefore, an
upper bound for $P_{e,1}$ can be obtained by replacing $P_d$ in
(\ref{eq78}) by the lower bound in (\ref{eq68}). Similar arguments
apply to the upper bound for $P_{e,0}$.
\end{proof}
\noindent This result reveals that the $\{P_{e,1},P_{e,0}\}$ are
upper bounded by the order of $\nu$. In addition, the upper bound
$P_u$ also depends on the value of $\alpha$ used to update the
belief in (\ref{eq16}). We observe that the larger the value of
$\alpha$, the smaller the values of the error probabilities. In
simulations, we choose $\nu=0.05$ and $\alpha=0.95$, which will give
the upper bound in (\ref{eq99}) the value $P_u\approx 0.008\tau <
\nu \tau$. This implies that the classification scheme (\ref{eq37})
identifies the observed models with high probability.

\section{Rates of Convergence}
There are two rates of convergence to consider for adaptive networks
running a decision-making process of the form described in the
earlier sections. First, we need to analyze the rate at which the
nodes reach an agreement on a desired model (which corresponds to
the speed of the decision-making process). Second, we analyze the
rate at which the estimates by the nodes converge to the desired
model (which corresponds to the speed of the diffusion adaptation).

\subsection{Convergence Rate of Decision-Making Process}
From the proof of  Theorem \ref{thm1} (see Appendix \ref{appD}), the
decision-making process can be modeled as a Markov chain with $N+1$
states $\{\chi_i\}$ corresponding to the number of nodes whose
desired vectors are $w^\circ_1$. The Markov chain has two absorbing
states $\{0,N\}$ and its transition probability matrix $P$ can be
written as:
\begin{equation}\label{eq81}
    P = \begin{bmatrix}
    1 & 0 & 0 \\
    b & Q & c \\
    0 & 0 & 1
    \end{bmatrix}
\end{equation}
where the matrix $Q$ of size $(N-1)\times(N-1)$ is the transition
matrix among the transient states $\{1,2,\cdots,N-1\}$, and the
vectors $\{b,c\}$ of size $N-1$ are the transition probabilities
from the transient states to the absorbing states. The convergence
rate of the decision-making process is then determined by the rate
at which, starting at any arbitrary transient state, the Markov
chain converges to one of the absorbing states. The argument that
follows is meant to show that the rate of convergence of the
decision making process improves with the parameter $K$ used in
(\ref{eq57}); the larger the value of $K$ the faster is the
convergence.

To arrive at this conclusion, we first remark that to assess  the
rate of convergence, we need to compute the $j$th power of $P$ from
(\ref{eq81}) to find that
\begin{equation}
    P^j = \begin{bmatrix}
    1 & 0 & 0 \\
    \bar{b} & Q^j & \bar{c} \\
    0 & 0 & 1
    \end{bmatrix}
\end{equation}
where $\{\bar{b},\bar{c}\}$ are two $N\times 1$ vectors. Let the
Markov chain start from any arbitrary initial state distribution,
$y$, of the form
\begin{equation}
    y^T = \begin{bmatrix}
    0 & y^T_Q & 0
    \end{bmatrix}
\end{equation}
where $y_Q$ is a vector of size $N-1$ and its entries add up to one,
i.e., $y_Q^T\mathds{1}_{N-1} = 1$. We shall select $y_Q$ in a manner
that enables us to determine how the convergence rate depends on
$K$. Thus, note that the state distribution after $j$ transitions
becomes
\begin{equation}
    y^TP^j =
    \begin{bmatrix}
    y^T_Q\bar{b} & y^T_QQ^j & y^T_Q\bar{c}
    \end{bmatrix}.
\end{equation}
Therefore, the convergence rate is measured by the rate at which the
matrix $Q^j$ converges to zero, which is determined by the spectral
radius of $Q$. Since $Q$ is the sub-matrix of the transition
probability matrix, all entries of $Q$ are nonnegative, then by the
Perron-Frobenius Theorem \cite{Horn85}, the vector $y_Q$ can be
selected to be the left eigenvector of $Q$ corresponding to the
eigenvalue $\rho(Q)$, i.e., $y_Q^T Q = \rho(Q)y_Q^T$. Moreover, from
(\ref{eq123}), the matrix $Q$ is primitive and, therefore, all
entries of $y_Q$ are positive. Furthermore, since the transition
probability matrix $P$ is right-stochastic (i.e.,
$P\mathds{1}_{N+1}=\mathds{1}_{N+1}$), from (\ref{eq81}) it holds
that
\begin{equation} \label{eq82}
    b + c + Q\mathds{1}_{N-1} = \mathds{1}_{N-1}.
\end{equation}
Pre-multiplying the vector $y_Q$ on both sides of (\ref{eq82}), we
obtain that the convergence rate of the decision-making process can
be determined by
\begin{align}
    \rho(Q) = y_Q^TQ\mathds{1}_{N-1}
    = 1-y_Q^T(b+c).\label{eq52}
\end{align}
We now determine the value of the vector sum $b+c$. We note from
(\ref{eq121}) that the transition probabilities $\{p_{n,m}\}$ in $Q$
are determined by the probability $q_{k,i-1}$ from (\ref{eq57}), so
is the spectral radius of $Q$. We further note from (\ref{eq57})
that there is a single parameter $K$ dictating the value of
$q_{k,i-1}$. In the following, we examine the dependence of the
convergence rate $\rho(Q)$ on the parameter $K$. It is generally
challenging to develop the relation because the transition
probability $p_{n,m}$ needs to be computed in a compounded way where
we need to evaluate the summation of the products of
$\{q_{k,i-1}\}$. Nevertheless, some useful insights can be obtained
by means of the following approximate argument. Suppose the network
size is sufficiently large and that the nodes are uniformly
distributed in the spatial domain so that each of the nodes in the
network has approximately the same number of neighbors collecting
data from model $w_{1}^\circ$; likewise, each of the nodes in the
network has approximately the same number of neighbors collecting
data from the other model. Suppose that there are $\chi_{i-1}=n$ out
of $N$ nodes with desired model $w^\circ_1$, then, on average, node
$k$ with $n_k$ neighbors will have $n_kn/N$ neighbors whose desired
model is $w^\circ_1$ and have $n_k(1-n/N)$ neighbors whose desired
model is $w^\circ_0$. Then, from rule (\ref{eq17})-(\ref{eq57}),
node $k$ chooses $w^\circ_1$ as its desired model with probability
\begin{align}\label{eq110}
    q_n &\triangleq\frac{(n_kn/N)^K}{(n_kn/N)^K+(n_k(N-n)/N)^K} \notag \\
    &= \frac{n^K}{n^K+(N-n)^K}
\end{align}
which is independent of the node index $k$ and is denoted by $q_n$.
Then, the second summation in (\ref{eq121}) can be evaluated in a
way that there are $m$ out of $N$ nodes choosing $w^\circ_1$ as
their desired model and the remaining $N-m$ nodes choosing
$w^\circ_0$, which is equal to
\begin{equation}\label{eq125}
    \binom{N}{m}q^m_n(1-q_n)^{N-m}.
\end{equation}
Note that the probability in (\ref{eq125}) depends on $g_{i-1}$ only
through its sum, which is equal to $n$. Therefore, the transition
probability $p_{n,m}$ in (\ref{eq121}) has the same form as
(\ref{eq125}). To evaluate the spectral radius of $Q$ from
(\ref{eq52}), we need the value of $p_{n,0}+p_{n,N}$ (i.e., the
$n$th entry of $b+c$), which is given by:
\begin{equation}\label{eq51}
    p_{n,0}+p_{n,N} = \frac{n^{NK}+(N-n)^{NK}}{(n^K+(N-n)^K)^N}.
\end{equation}
The following result establishes a monotonicity property of the sum
in (\ref{eq51}).
\begin{lem}\label{lem1}
Let $f(x)$ be a function of the form
\begin{equation}
    f(x) = \frac{a^{Nx}+b^{Nx}}{(a^x+b^x)^N}
\end{equation}
for some positive scalars $\{a,b,N\}$ with $N>1$. Then, $f(x)$ is a
non-decreasing function, i.e.,
\begin{equation}
    f'(x) \geq 0
\end{equation}
with equality if, and only if, $a=b$.
\end{lem}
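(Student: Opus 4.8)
The plan is to collapse the two bases $\{a,b\}$ into a single ratio and thereby reduce the claim to a one-variable monotonicity statement. Assume without loss of generality that $a \geq b > 0$ and set $r \triangleq b/a \in (0,1]$. Dividing the numerator and denominator of $f$ by $a^{Nx}$ gives
\[
    f(x) = \frac{1 + r^{Nx}}{(1 + r^{x})^{N}} = g(r^{x}), \qquad
    g(u) \triangleq \frac{1 + u^{N}}{(1+u)^{N}},
\]
so that all of the $x$-dependence is funneled through the single quantity $u = r^{x}$. The next step is to differentiate $g$. The pleasant feature here is that the $u^{N}$ terms cancel: a direct computation yields
\[
    g'(u) = \frac{N\bigl(u^{N-1} - 1\bigr)}{(1+u)^{N+1}}.
\]
Since $N > 1$, this shows $g'(u) < 0$ for $u \in (0,1)$ and $g'(1) = 0$; hence $g$ is strictly decreasing on $(0,1]$.

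It then remains to push this back through the chain rule. We have $f'(x) = g'(r^{x})\, r^{x} \ln r$. On the relevant range $x > 0$ --- which is exactly the regime of interest, since the exponent $K$ playing the role of $x$ in (\ref{eq51}) is positive --- we have $r^{x} \in (0,1]$, so $g'(r^{x}) \leq 0$; combining this with $r^{x} > 0$ and $\ln r \leq 0$, the three factors multiply to a nonnegative number, giving $f'(x) \geq 0$. For the equality analysis: if $a = b$ then $r = 1$, whence $\ln r = 0$ and in fact $f \equiv 2^{1-N}$ is constant, so $f' \equiv 0$; conversely, if $a \neq b$ then $r < 1$, so $\ln r < 0$ and $r^{x} \in (0,1)$ gives $g'(r^{x}) < 0$ for every $x > 0$, forcing $f'(x) > 0$ strictly. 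Thus on the half-line $x > 0$ we conclude $f'(x) = 0$ if and only if $a = b$, as claimed.

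The only genuinely delicate point is the sign bookkeeping after the substitution, together with the observation that the monotonicity is a statement about the half-line $x \geq 0$ rather than all of $\mathbb{R}$: multiplying through by $(ab)^{Nx}$ shows $f(-x) = f(x)$, so $f$ is even and cannot be globally monotone unless it is constant. This even symmetry is harmless for the application, where only $x = K > 0$ occurs, but it is the reason the $\ln r$ factor must be tracked carefully. Everything else is a routine derivative computation, rendered painless by the cancellation of the $u^{N}$ terms. (As an alternative route, one could instead examine $f'/f$ via the logarithmic derivative and recognize the bracketed difference as an increment of the weighted average $m(s) = (a^{sx}\ln a + b^{sx}\ln b)/(a^{sx}+b^{sx})$ of $\{\ln a,\ln b\}$, whose derivative in $s$ equals $x(\ln a - \ln b)^2 p(1-p) \geq 0$ for a Bernoulli weight $p$; this makes the sign manifest but is essentially equivalent to the substitution argument above.)
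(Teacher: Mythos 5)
Your proof is correct, and it supplies what the paper leaves out: the paper's own ``proof'' is the single sentence that the result ``follows from evaluating $f'(x)$,'' with no computation shown, so any honest verification has to do what you did. Your route is the natural one but executed with a clean reduction: dividing through by $a^{Nx}$ to get $f(x)=g(r^x)$ with $g(u)=(1+u^N)/(1+u)^N$ collapses the problem to one variable, and the derivative $g'(u)=N(u^{N-1}-1)/(1+u)^{N+1}$ indeed has the claimed sign (I verified the cancellation of the $u^N$ terms). Your sign bookkeeping through the chain rule, $f'(x)=g'(r^x)\,r^x\ln r$, and the equality analysis are both right. The one genuinely valuable thing you add beyond the paper is the observation that $f(-x)=f(x)$, so $f'$ is odd and the lemma as literally stated ($f'(x)\geq 0$ with equality iff $a=b$) can only hold on the half-line $x>0$: for $a\neq b$ one has $f'(x)<0$ for $x<0$ and $f'(0)=0$. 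This is a real imprecision in the paper's statement, though harmless for its use in Theorem \ref{thm4}, where the lemma is applied at $x=K$, a positive constant, and where your strict inequality $f'(x)>0$ for $x>0$, $a\neq b$ is exactly what is needed to conclude $\rho[Q(K+1)]<\rho[Q(K)]$.
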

\begin{proof}
The proof follows from evaluating $f'(x)$.
\end{proof}
Since the spectral radius of $Q$ depends on the value of $K$ in
(\ref{eq17}), we will index the quantities with the parameter $K$.
For example, we denote the spectral radius of $Q$ by $\rho[Q(K)]$.
The following result relates the convergence rate of the
decision-making process to the parameter $K$.
\begin{thm}\label{thm4}
The spectral radius $\rho[Q(K)]$ is a strictly decreasing function
of $K$ for $N>2$, i.e.,
\begin{equation}\label{eq128}
    \rho[Q(K+1)] < \rho[Q(K)].
\end{equation}
\end{thm}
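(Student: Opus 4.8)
The plan is to treat $K$ as a continuous parameter and to show that $\rho[Q(K)]$ is strictly decreasing in $K$; the integer statement (\ref{eq128}) then follows by integrating the derivative over $[K,K+1]$. Because the matrix $Q$ is primitive (as noted below (\ref{eq123})), its Perron eigenvalue $\rho(Q)$ is simple and admits strictly positive left and right eigenvectors, which I denote by $y_Q$ (the same vector used in (\ref{eq52}), satisfying $y_Q^TQ=\rho(Q)y_Q^T$) and $v$ (satisfying $Qv=\rho(Q)v$). For a simple eigenvalue the standard first-order perturbation identity gives
\begin{equation*}
    \frac{d}{dK}\rho[Q(K)] = \frac{y_Q^T\,Q'(K)\,v}{y_Q^T v},
\end{equation*}
so everything reduces to signing the numerator, and the value of $y_Q^T v>0$ is irrelevant.

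The next step is to differentiate the entries $p_{n,m}=\binom{N}{m}q_n^m(1-q_n)^{N-m}$ of $Q$. From (\ref{eq110}) one has $q_n=n^K/(n^K+(N-n)^K)$, whose logarithmic derivative is $dq_n/dK=q_n(1-q_n)\ln[n/(N-n)]$; combining this with the binomial score $\partial p_{n,m}/\partial q_n=p_{n,m}(m-Nq_n)/[q_n(1-q_n)]$ yields
\begin{equation*}
    \frac{\partial p_{n,m}}{\partial K}=p_{n,m}\,(m-Nq_n)\,\ln\frac{n}{N-n}.
\end{equation*}
Substituting into the perturbation identity and extending $v$ by $v_0=v_N=0$ (consistent with $Q$ being the transient block of $P$ in (\ref{eq81})), the inner sum over $m$ becomes a covariance under the one-step law $\mathrm{Bin}(N,q_n)$, since the factor $m-Nq_n$ is centered at the binomial mean. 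This gives
\begin{equation*}
    \frac{d}{dK}\rho[Q(K)]=\frac{1}{y_Q^Tv}\sum_{n=1}^{N-1}y_Q(n)\,\ln\frac{n}{N-n}\;C_n,\qquad C_n\triangleq \mathrm{Cov}_{\,m\sim\mathrm{Bin}(N,q_n)}(m,v_m).
\end{equation*}

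The final step is a sign analysis exploiting the symmetry $p_{n,m}=p_{N-n,N-m}$ (equivalently $q_{N-n}=1-q_n$), which forces both $y_Q$ and $v$ to be symmetric about $N/2$, i.e. $y_Q(n)=y_Q(N-n)$ and $v_m=v_{N-m}$. Reflecting $m\mapsto N-m$ then shows $C_{N-n}=-C_n$, while $\ln[(N-n)/n]=-\ln[n/(N-n)]$, so the $n$ and $N-n$ terms coincide and it suffices to sign $\ln[n/(N-n)]\,C_n$ for $n>N/2$ (the term $n=N/2$, when present, vanishes). For $n>N/2$ the mean $Nq_n$ exceeds $N/2$, so the bulk of $\mathrm{Bin}(N,q_n)$ sits in the region where the symmetric, unimodal eigenvector $v$ is decreasing; hence $C_n<0$, while $\ln[n/(N-n)]>0$, making the term strictly negative. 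Since $y_Q>0$ and, for $N>2$, at least one transient state has $n\neq N/2$, the whole sum is strictly negative, giving $\rho'[Q(K)]<0$.

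I expect the genuinely hard part to be this last step: rigorously establishing that the right Perron eigenvector $v$ is symmetric and unimodal (peaked at $N/2$), and proving the accompanying covariance inequality $C_n<0$ when $q_n>1/2$ against such a symmetric unimodal weight. The unimodality is intuitively clear --- states near the balanced configuration $n=N/2$ take longest to be absorbed and so carry the largest Perron weight --- but turning this into a proof requires a self-consistency (fixed-point) argument on $Qv=\rho v$, and the covariance sign is \emph{not} a consequence of monotone (Chebyshev-type) association because $v$ is not monotone; it must genuinely use the binomial structure together with the fact that $q_n>1/2$ shifts mass onto the decreasing branch of $v$. The degenerate case $N=2$, where the only transient state is the center $n=1=N/2$ and $\rho$ is independent of $K$, confirms why the strict conclusion requires $N>2$.
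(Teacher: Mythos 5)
Your reductions are all correct as far as they go: the first-order perturbation identity for a simple Perron eigenvalue, the computation $\partial p_{n,m}/\partial K = p_{n,m}(m-Nq_n)\ln[n/(N-n)]$, the covariance rewriting with $v_0=v_N=0$, the symmetry $p_{N-n,N-m}=p_{n,m}$ forcing $y_Q$ and $v$ to be symmetric about $N/2$, and the observation that the $n=N/2$ term vanishes (which correctly explains the role of $N>2$). But the argument stops exactly where the theorem lives. The sign claim $C_n<0$ for $n>N/2$ rests on two unproven assertions: that the right Perron eigenvector $v$ of $Q(K)$ is unimodal with peak at $N/2$, and that the covariance of $m$ with such a symmetric unimodal $v_m$ under the skewed law $\mathrm{Bin}(N,q_n)$, $q_n>1/2$, is strictly negative. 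Neither is a standard citable fact for this family of substochastic matrices, and, as you note yourself, the covariance sign cannot be obtained from Chebyshev-type association because $v$ is not monotone. What you have is therefore a reduction of the theorem to an open lemma --- a genuine gap, not a proof.

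The paper's route makes both of these issues unnecessary. Starting from (\ref{eq52}), it writes $\rho[Q(K)] = 1-\sum_{n=1}^{N-1}y_{Q,n}\,[p_{n,0}(K)+p_{n,N}(K)]$, so only the total absorption probabilities (\ref{eq51}) enter --- never the interior entries of $Q$ and never its right eigenvector. Lemma \ref{lem1}, an elementary calculus fact about $f(x)=(a^{Nx}+b^{Nx})/(a^x+b^x)^N$, shows each absorption probability is nondecreasing in $K$ and strictly increasing unless $n=N/2$; positivity of $y_Q$ (primitivity of $Q$) then yields (\ref{eq128}). It is worth noting that your perturbation identity is, in one respect, more careful than the paper's own comparison: it accounts exactly for the $K$-dependence of the eigenvectors, whereas the paper compares $\rho[Q(K+1)]$ and $\rho[Q(K)]$ using the same positive weights $y_{Q,n}$ even though $y_Q$ itself varies with $K$. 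But that extra care is precisely what drags the right eigenvector $v$ into your argument and creates the missing lemma. To salvage your approach you would have to prove the unimodality of $v$ and the covariance inequality; alternatively, adopt the paper's row-sum identity, which sidesteps both.
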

\begin{proof}
From (\ref{eq52}), the spectral radius $\rho[Q(K)]$ is given by:
\begin{equation}
    \rho[Q(K)] = 1 -
    \sum_{n=1}^{N-1}y_{Q,n}[p_{n,0}(K)+p_{n,N}(K)]
\end{equation}
where $y_{Q,n}$ is the $n$th entry of $y_Q$ and the sum inside the
brackets is shown in (\ref{eq51}). From Lemma \ref{lem1}, we have
that
\begin{equation}\label{eq83}
    p_{n,0}(K+1)+p_{n,N}(K+1) \geq p_{n,0}(K)+p_{n,N}(K)
\end{equation}
with equality if, and only if, $n=N/2$. Therefore, if $N> 2$, there
exists $n\in{1,2,\cdots,N-1}$ such that strict inequality holds in
(\ref{eq83}). Moreover, since the matrix $Q$ is primitive, the
$\{y_{Q,n}\}$ are positive and we arrive at (\ref{eq128}).
\end{proof}
We therefore conclude that to improve the convergence rate of the
decision-making process, the nodes should use larger values of $K$.
Nevertheless, it may not be beneficial for the network to seek fast
convergence during the decision making process because the network
(e.g., a fish school) may converge to a bad model (e.g., a food
source of poor quality). There exists a trade-off between
exploration and exploitation, as in the case of multi-armed bandit
problem \cite{Gittins89}. Such trade-off can be taken into account
by introducing some weighting scalar $\beta_{k}(i-1)$ that measures
the quality of the desired model of node $k$ at time $i-1$ relative
to the other model. The higher values of $\beta_{k}(i-1)$, the
better the quality of the model and the higher probability that node
$k$ will maintain its desired model. Therefore, node $k$ adjusts the
probability $q_{k,i-1}$ from (\ref{eq57}) to
\begin{equation}
    q_{k,i-1} = \frac{\left[\beta_k(i-1)n^g_k(i-1)\right]^K}
    {\left[\beta_k(i-1)n^g_k(i-1)\right]^K+
    [n_k-n^g_k(i-1)]^K}.
\end{equation}

\subsection{Convergence Rate of Diffusion Adaptation}\label{sec4}
Using the arguments in Section VI, we assume in the following that
the nodes have achieved agreement on the desired model, say,
$w^\circ_q$ as in (\ref{eq30}). We know from the proof of Theorem
\ref{thm2} (see Appendix \ref{appC}) that a modified diffusion
network is equivalent to a network with a mixture of informed and
uninformed nodes, as studied in \cite{Tu13a}. That is, nodes whose
observed model is identical to its desired model ($f(l)=q$) are
informed; otherwise they are uninformed. The convergence rate of the
learning process specifies the rate at which the mean-square error
converges to steady-state. Using the results of \cite{Tu13a}, we can
deduce that the convergence rate, denoted by $r$, of the modified
diffusion strategy (\ref{eq6})-(\ref{eq15}) is given by:
\begin{equation}\label{eq85}
    r = \left[\rho(\mathcal{B})\right]^2
\end{equation}
where $\mathcal{B}$ is defined in Table \ref{tab1}. Note that the
value of $r$ depends on the combination matrix $A$. Under
Assumptions \ref{ass1}-\ref{ass2}, it was shown that the convergence
rate is bounded by \cite{Tu13a}:
\begin{equation}\label{eq86}
    \left(1-\mu\lambda_{\min}(R_u)\right)^2\leq r< 1.
\end{equation}
To improve the convergence rate, it is desirable for the nodes to
select their combination weights so that the network has lower value
of $r$. It was shown in \cite{Tu13a} that for any connected network,
the convergence rate (\ref{eq85}) can achieve the lower bound in
(\ref{eq86}) (namely, the network is able to converge to
steady-state at the fastest rate) by selecting the combination
matrix $A$ according to the following rules:
\begin{enumerate}
\item If there are informed nodes (i.e., nodes with positive step-sizes)
in the neighborhood of node $k$, then it will assign positive
combination weights to those nodes only.
\item Otherwise, node $k$ will assign positive combination
weights to neighbors that are closer (with shorter path) to informed
nodes.
\end{enumerate}
However, there are two issues with this construction. First, it is
difficult to construct the weights in a distributed manner because
rule 2) requires spatial distribution of informed nodes. Second, the
constructed combination matrix is not primitive (i.e., Assumption
\ref{ass3} does not hold) because there are no links from uninformed
nodes to informed nodes. Therefore, Theorem \ref{thm2} would not
apply here. In the following, we first propose a way to select
combination weights that approximate rule 2) and then show that the
approximate weights ensure mean convergence.

Let $\mathcal{N}_{k}^f$ denote the set of nodes that are in the
neighborhood of $k$ and whose observed model is the same as the
desired model $w^\circ_q$ (i.e., they are informed neighbors)
\begin{equation}
    \mathcal{N}_{k}^f = \{l\;|\;l\in\mathcal{N}_k,
    f(l)=q\}.
\end{equation}
Also, let $n^f_k$ denote the number of nodes in the set
$\mathcal{N}_{k}^f$. The selection of combination weights is
specified based on three types of nodes: informed nodes ($f(k)=q$),
uninformed nodes with informed neighbors ($f(k)\neq q$ and
$n^f_k\neq 0$), and uninformed nodes without informed neighbors
($f(k)\neq q$ and $n^f_k = 0$). The first two types correspond to
rule 1) and their weights can satisfy rule 1) by setting
\begin{equation}\label{eq95}
    a_{l,k} = \begin{cases}
    1/n^f_k, &\text{if $l\in\mathcal{N}^f_k$}\\
    0, &\text{otherwise}
    \end{cases}.
\end{equation}
That is, node $k$ places uniform weights on the informed neighbors
and zero weights on the others. The last type of nodes corresponds
to rule 2). Since these nodes do not know the distribution of
informed nodes, a convenient choice for the approximate weights they
can select is for them to place zero weights on themselves and
uniform weights on the others, i.e.,
\begin{equation}\label{eq96}
    a_{l,k} = \begin{cases}
    1/(n_k-1), &\text{if $l\in\mathcal{N}_k$ and $l\neq k$}\\
    0, &\text{otherwise}
    \end{cases}.
\end{equation}
Note that the weights from (\ref{eq95})-(\ref{eq96}) can be set in a
fully distributed manner and in real-time. To show the mean
convergence of the modified diffusion strategy using the combination
matrix $A$ constructed from (\ref{eq95})-(\ref{eq96}), we resort to
Theorem 1 from \cite{Tu13a}. It states that the strategy converges
in the mean for sufficiently small step-sizes if for any node $k$,
there exists an informed node $l$ and an integer power $j$ such that
\begin{equation}\label{eq97}
    [A^j]_{l,k} > 0.
\end{equation}
Condition (\ref{eq97}) is clearly satisfied for the first two types
of nodes. For any node belonging to the last type, since the network
is connected and from (\ref{eq96}), there exists a path with nonzero
weight from a node of the second type (uninformed with informed
neighbors) to itself. In addition, there exist direct links from
informed nodes to the nodes of the second type, condition
(\ref{eq97}) is also satisfied. This implies that the modified
diffusion strategy using the combination weights from
(\ref{eq95})-(\ref{eq96}) converges in the mean.

\section{Simulation Results}
We consider a network with 40 nodes randomly connected. The model
vectors are set to $w^\circ_0=[5;-5;5;5]$ and $w^\circ_1 =
[5;5;-5;5]$ (i.e. $M=4$). Assume that the first 20 nodes (nodes 1
through 20) observe data originating from model $w^\circ_0$, while
the remaining nodes observe data originating from model $w^\circ_1$.
The regression covariance matrix $R_u$ is diagonal with each
diagonal entry generated uniformly from $[1,2]$. The noise variance
at each node is generated uniformly from $[-35,-5]$ dB. The
step-sizes are set to $\mu=0.005$, $\nu=0.05$, and $\alpha=0.95$.
The threshold $\eta$ is set to $\eta =1$. The network employs the
decision-making process with $K=4$ in (\ref{eq57}) and the uniform
combination rule: $a_{l,k} =1/n_k$ if $l\in\mathcal{N}_k$.

In Fig. \ref{Fig_5}, we illustrate the network mean-square deviation
(MSD) with respect to the two model vectors over time, i.e.,
\begin{equation}
    \text{MSD}_q(i) = \frac{1}{N}\sum_{k=1}^N\mathbb{E}
    \|w^\circ_q-\boldsymbol{w}_{k,i}\|^2
\end{equation}
for $q=0$ and $q=1$. We compare the conventional ATC diffusion
strategy (\ref{eq49})-(\ref{eq1}) and the modified ATC diffusion
strategy (\ref{eq6})-(\ref{eq15}) with decision-making. We observe
the bifurcation in MSD curves of the modified ATC diffusion
strategy. Specifically, the MSD curve relative to the model
$w^\circ_0$ converges to 23 dB, while the MSD relative to
$w^\circ_1$ converges to -50 dB. This illustrates that the nodes
using the modified ATC diffusion are able to agree on a model and to
converge to steady-state (to model $w^\circ_1$ in this case). We
also show in Fig. \ref{Fig_6} the evolution of the beliefs
$\{b_{k,i}(l)\}$ for a particular node using the update rule
(\ref{eq16}). The node has two neighbors observing data that
originate from the same model and two neighbors observing data from
a different model. We observe that, at the initial stage of
adaptation, all beliefs increase. Nevertheless, as time evolves, the
node is able to differentiate between the two models and the beliefs
for the latter two neighbors decrease. Note that the belief
converges to one if a node has the same observed model; otherwise,
it converges to zero. This indicates that the classification scheme
successfully identifies the observed models of neighboring nodes. On
the other hand, for the conventional diffusion strategy, the nodes
also converge because the MSD curves in Fig. \ref{Fig_5} remain
flat. However, the MSD values are large (about 18 dB). This implies
that the nodes converge to a common vector that does not coincide
with either of the model vectors.

\begin{figure}
\centering
\includegraphics[width=20em]{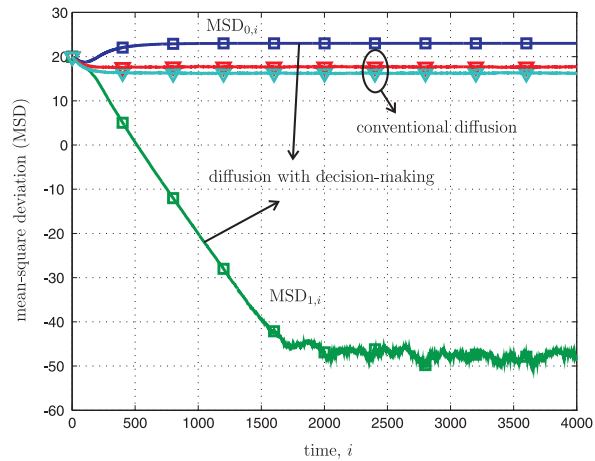}
\caption{Transient network MSD over a network using the conventional
diffusion strategy (\ref{eq49})-(\ref{eq1}) and using the modified
diffusion strategy (\ref{eq6})-(\ref{eq15}). The network with
decision-making converges to the model $w^\circ_1$ while the network
without decision making converges to a vector that is not identical
to either of the model vectors.} \label{Fig_5}
\end{figure}

\begin{figure}
\centering
\includegraphics[width=20em]{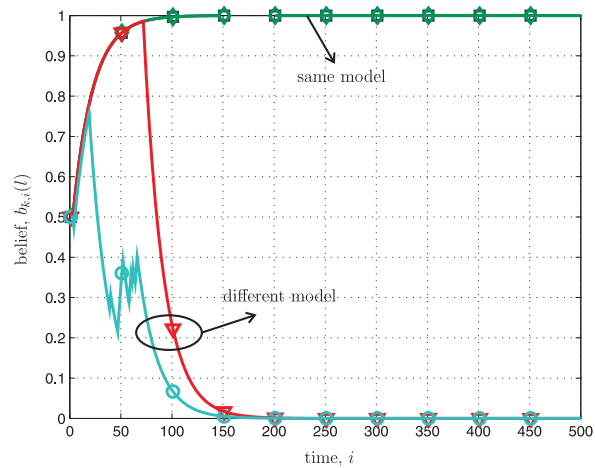}
\caption{Evolution of beliefs using (\ref{eq16}) at a particular
node. The node has four neighbors; two of them collect data from the
same model while the other two collect data from a different model.}
\label{Fig_6}
\end{figure}

We also show the dependence of the convergence rate on the parameter
$K$. We compare two modified diffusion strategies using
decision-making with $K=1$ and $K=4$ in (\ref{eq57}). The network
MSD curves for these two strategies are shown in Fig. \ref{Fig_8}.
We observe that the MSD curves relative to the model $w^\circ_1$
decrease at the same rate and converge to the same steady-state
value. However, there is about 75 shift in time between these
curves: the MSD$_{1,i}$ with $K=4$ is 75 time steps ahead of the MSD
curve with $K=1$. As the analytical result revealed, the
decision-making processes adopting larger values of parameter $K$
achieve agreement at faster rate. We also consider the effect of the
combination weights on the convergence rate of the adaptation
strategies. Figure \ref{Fig_9} illustrates the modified diffusion
strategies with different combination weights: one with the uniform
combination rule and the other one with the combination rule in
(\ref{eq95})-(\ref{eq96}). We observe that the diffusion strategy
using the proposed rule converges at faster rate with some
degradation in steady-state MSD. Note that the trade-off between
convergence rate and MSD is also indicated in \cite{Tu13a}.

\begin{figure}
\centering
\includegraphics[width=20em]{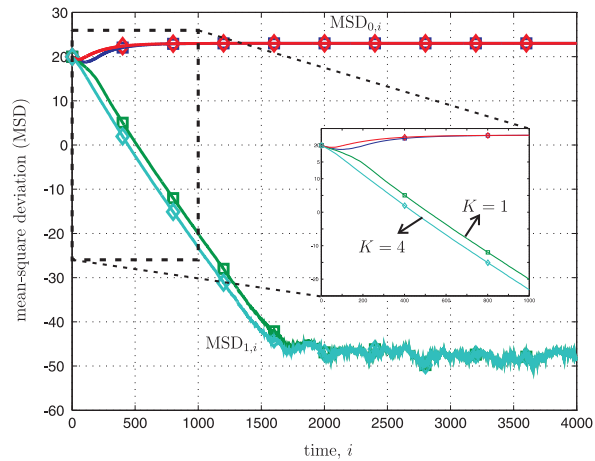}
\caption{Transient network MSD over the modified diffusion
strategies (\ref{eq6})-(\ref{eq15}) with decision-making process for
$K=1$ and $K=4$ in (\ref{eq57}).} \label{Fig_8}
\end{figure}

\begin{figure}
\centering
\includegraphics[width=20em]{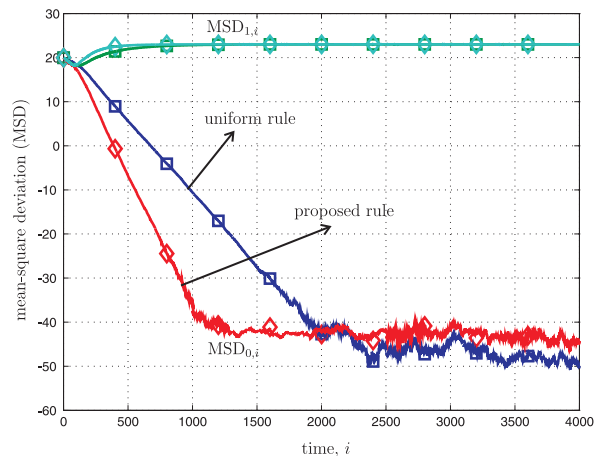}
\caption{Transient network MSD over the modified diffusion strategy
(\ref{eq6})-(\ref{eq15}) using the uniform combination rule and the
proposed rule (\ref{eq95})-(\ref{eq96}).} \label{Fig_9}
\end{figure}

We apply the results of this paper to model the fish schooling
behavior in the presence of two food sources (located at $w^\circ_0$
and $w^\circ_1$). It is observed in nature that fish move in a
harmonious manner so that they align their motion and keep a safe
distance from each other
\cite{Jadbabaie03,Gazi04,Olfati06,Lorenzo11,Zavlanos11}. We apply
the motion control mechanism from \cite{Tu11a} to model mobile
agents. Let $x_{k,i}$ denote the location vector of node $k$ at time
$i$. Every node $k$ adjusts its location vector according to the
rule:
\begin{equation}\label{eq105}
    x_{k,i+1} = x_{k,i}+\Delta t \cdot v_{k,i+1}
\end{equation}
where $\Delta t$ is a positive time step and $v_{k,i+1}$ is the
velocity vector at node $k$, which is set according to the rule:
\begin{equation}\label{eq106}
    v_{k,i+1} = \lambda \frac{w_{k,i}-x_{k,i}}{\|w_{k,i}-x_{k,i}\|}
    +\beta \sum_{l\in\mathcal{N}_k}c_{l,k}v_{k,i}+\gamma \delta_{k,i}
\end{equation}
where $\{\lambda,\beta,\gamma\}$ are nonnegative scalars and
$\delta_{k,i}$ helps the nodes keep a certain distance $d_s$ to each
other and is given by
\begin{equation*}
    \delta_{k,i} = \frac{1}{n_k-1}\sum_{l\in\mathcal{N}_k\setminus \{k\}}
    \left(\|x_{l,i}-x_{k,i}\|-d_s\right)\frac{x_{l,i}-x_{k,i}}{\|x_{l,i}-x_{k,i}\|}.
\end{equation*}
The nodes employ the diffusion strategy to estimate the location of
food sources. This is achieved as follows. We showed in \cite{Tu11a}
that the distance, $d^\circ_k(i)$, between the target located at
$w^\circ$ and a node $k$ located at $x_{k,i}$ can be expressed as
the inner product (see Fig. \ref{Fig_15}):
$d^\circ_k(i)=u^\circ_{k,i}(w^{\circ}-x_{k,i})$ where
$u^\circ_{k,i}$ denotes the unit direction vector pointing to
$w^\circ$ from $x_{k,i}$. However, the nodes observe a noisy
distance $d_k(i)$ and a noisy direction $u_{k,i}$ to the target,
which can be related to $w^\circ$ as follows (the same form as
(\ref{eq2})):
\begin{align}
    \hat{d}_k(i) \triangleq d_k(i)+u_{k,i}x_{k,i}
    = u_{k,i}w^\circ+v_{k}(i) \label{eq14}
\end{align}
where $v_k(i)$ is the scalar noise term and its variance is
proportional to the distance of node $k$ to the target, i.e.,
\begin{equation}
    \sigma^2_{v,k,i}=\kappa\|w^\circ-\boldsymbol{x}_{k,i}\|^2
\end{equation}
with $\kappa=0.01$. In simulation, there are two targets located at
$w^\circ_0=[10,10]$ and $w^\circ_1=[-10,10]$. The nodes then apply
Algorithm in Section \ref{sec5} to achieve agreement on a desired
target. The simulation results are illustrated in Fig. \ref{Fig_14}.
The parameters used in (\ref{eq105})-(\ref{eq106}) are set to
$(\Delta t,\lambda,\beta,\gamma,d_s)=(0.1,0.3,0.7,1,3)$. Initially,
there are $40$ nodes uniformly distributed in a $20\times 20$ square
area around the origin. There are $20$ nodes collecting data that
originate from target $w^\circ_0$ and the remaining $20$ nodes
collecting data arising from the other target $w^\circ_1$. In Fig.
\ref{Fig_14}, nodes that would like to move towards $w^\circ_0$ are
shown as blue dots and nodes that would like to move towards
$w^\circ_1$ are shown as red circles. We observe that the node
achieve agreement on a desired target and get to the target (at
$w^\circ_1=[40,-40]$ in this case).

\begin{figure}
\centering
\includegraphics[width=16em]{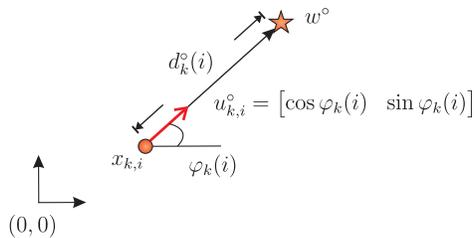}
\caption{Distance and direction of the target $w^\circ$ from node
$k$ at location $x_k$. The unit direction vector $u_k^\circ$ points
towards $w^\circ$.} \label{Fig_15}
\end{figure}

\begin{figure}
\centering
\includegraphics[width=24em]{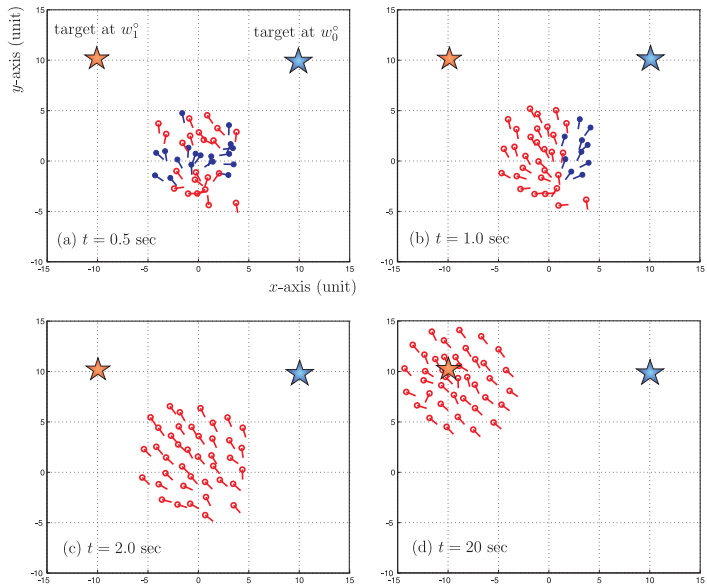}
\caption{Maneuver of fish schools with two food sources over time
(a) $t=0.5$ (b) $t=1$ (c) $t=2$ and (d) $t=20$ sec.} \label{Fig_14}
\end{figure}

\section{Concluding Remarks}
In the presence of distinct models observed by the nodes in a
network, conventional distributed estimation strategies will lead to
biased solutions. In this paper, we proposed a modified strategy to
address this issue. To do so, we allow the nodes to exchange not
only intermediate estimates, but also previous estimates. We also
developed a classification scheme and a decision-making procedure
for the nodes to identify the underlying models that generate data
and to achieve agreement among the nodes on the desired objective.
It is useful to comment on some features of the proposed framework.

We focused in this work on the case where nodes need to choose
between two models. Extension of the techniques to multiple models
require additional analysis. The case of two models is not a serious
limitation especially since many hypothesis testing problems tend to
be formulated as deciding between two choices. In addition, it is
natural to expect that convergence of the decision process will
occur towards one model or the other in a probabilistic manner since
the outcome is influenced by the fraction of nodes that sense data
from one model or another. Interestingly, though, the
decision-making process and the estimation task are largely
independent of each other. This is because there are two tasks that
the nodes need to accomplish. First, they need to decide which of
the two models to follow and, second, they need to estimate the
model. To solve the first task, agents do not need to know the exact
model values. An arbitrary node $k$ only needs to know whether a
neighboring node $l$ is observing data from the same model or from a
different model regardless of the model values. This property
enables the initial decision process to converge faster and to be
largely independent of the estimation task.

{\appendices
\section{Proof of Theorem \ref{thm2}}\label{appC}
Without loss of generality, let $w^\circ_0$ be the desired model for
the network (i.e., $q=0$ in (\ref{eq30})) and assume there are $N_0$
nodes with indices $\{1,2,\ldots,N_0\}$ observing data arising from
the model $w^\circ_0$, while the remaining $N-N_0$ nodes observe
data arising from model $w^\circ_1$. Then, we obtain from
(\ref{eq29}), (\ref{eq24}), and (\ref{eq25}) that
\begin{align}
    \tilde{z}^\circ_k &=
    \begin{cases}
    0, &\text{if $k\leq N_0$}\\
    w^\circ_0-w^\circ_1, &\text{if $k>N_0$}
    \end{cases}\\
    a^{(1)}_{l,k} &= 0 \text{ if $l>N_0$} \quad\text{and}\quad
    a^{(2)}_{l,k} = 0 \text{ if $l\leq N_0$}.
\end{align}
Since the matrix $\mathcal{M}\mathcal{R}$ is block diagonal, we
conclude that
\begin{equation}
    y=0\quad\text{and}\quad
    \mathcal{B} = \mathcal{A}^T(I_{NM}-\mathcal{M}_e\mathcal{R})
\end{equation}
where $\mathcal{M}_e$ is an $N\times N$ block diagonal matrix of the
form
\begin{equation}\label{eq84}
    \mathcal{M}_e \triangleq \text{diag}
    \{\mu_1I_M,\cdots,\mu_{N_0}I_M,0,\cdots,0\}.
\end{equation}
That is, its mean recursion in (\ref{eq5}) is equivalent to the mean
recursion of a network running the traditional diffusion strategy
(\ref{eq49})-(\ref{eq1}) with $N_0$ nodes (nodes 1 to $N_0$) using
positive step-sizes and $N-N_0$ nodes (nodes $N_0+1$ to $N$) having
zero step-sizes. Then, according to Theorem 1 of \cite{Tu13a} and
under the assumption that the matrix $A$ is primitive, if the
step-sizes $\{\mu_1,\mu_2,\cdots,\mu_{N_0}\}$ are set to satisfy
(\ref{eq26}), then the spectral radius of $\mathcal{B}$ will be
strictly less than one.

\section{Proof of Theorem \ref{thm1}} \label{appD}
For a given vector $\boldsymbol{g}_{i-1}$, we denote by $\chi_{i-1}$
the number of nodes whose desired model is $w^\circ_1$ at time
$i-1$, i.e.,
\begin{equation}\label{eq103}
    \chi_{i-1} \triangleq \sum_{k=1}^N \boldsymbol{g}_{i-1}(k).
\end{equation}
From (\ref{eq98})-(\ref{eq57}), the vector $\boldsymbol{g}_{i}$
depends only on $\boldsymbol{g}_{i-1}$. Thus, the value of
$\chi_{i}$ depends only on $\chi_{i-1}$. Therefore, the evolution of
$\chi_i$ forms a Markov chain with $N+1$ states corresponding to the
values $\{0,1,2,\ldots,N\}$ for $\chi_i$. To compute the transition
probability, $p_{n,m}$, from state $\chi_{i-1}=n$ to state
$\chi_i=m$, let us denote by $\mathcal{G}_{n}$ the set of vectors
$g=\{g(1),g(2),\cdots,g(N)\}$ whose entries are either 1 or 0 and
add up to $n$, i.e.,
\begin{equation}\label{eq120}
    \mathcal{G}_{n} = \left\{g \;|\;
    \sum_{k=1}^N g(k)=n\right\}.
\end{equation}
Then, the $p_{n,m}$ can be written as:
\begin{align}\label{eq121}
    p_{n,m}=\sum_{\boldsymbol{g}_{i-1}\in\mathcal{G}_{n}}
    \Pr(\boldsymbol{g}_{i-1})\sum_{\boldsymbol{g}_i\in\mathcal{G}_{m}}
    \prod_{l=1}^N
    \Pr(\boldsymbol{g}_{i}(l)\;|\; \boldsymbol{g}_{i-1}(l))
\end{align}
where $\Pr(\boldsymbol{g}_{i-1})$ is \emph{a priori} probability and
where the probability $\Pr(\boldsymbol{g}_{i}(l)\;|\;
\boldsymbol{g}_{i-1}(l))$ is determined by (\ref{eq57}). Note that
for a static network, the transition probability $p_{n,m}$ is
independent of $i$, i.e., the Markov chain is \emph{homogeneous}
\cite{Lawler06}.

Now we assume that $\chi_{i-1}=n\neq0,N$. Since the network is
connected, for any $\boldsymbol{g}_{i-1}\in\mathcal{G}_{n}$ at least
one node (say, node $k$) has desired model $w^\circ_1$ and has a
neighbor with distinct desired model $w_0^\circ$ so that
$n^g_{k}(i-1) < n_{k}$ and $1-q_{k,i-1} > 0$ from (\ref{eq57}).
Since $q_{l,i-1}>0$ for all $l$, we obtain from (\ref{eq121}) that
\begin{align}
    p_{n,n-1}
    &\geq \sum_{\boldsymbol{g}_{i-1}\in\mathcal{G}_{n}}\Pr(\boldsymbol{g}_{i-1})
    (1-q_{k,i-1}) \prod_{l\neq k}q_{l,i-1}
    >0\notag \\
    p_{n,n} &>0\quad\text{and}\quad p_{n,n+1}>0 \label{eq123}
\end{align}
for $n\neq0,N$. When $n=0$ or $n=N$, we have that
$p_{0,0}=p_{N,N}=1$. This indicates that the Markov chain has two
absorbing states: $\chi_i=0$ (or,
$\boldsymbol{g}_i(1)=\boldsymbol{g}_i(2)=\cdots=\boldsymbol{g}_i(N)=0$)
and $\chi_i=N$ (or,
$\boldsymbol{g}_i(1)=\boldsymbol{g}_i(2)=\cdots=\boldsymbol{g}_i(N)=1$),
and for any state $\chi_i$ different from $0$ and $N$, there is a
nonzero probability traveling from an arbitrary state $\chi_i$ to
state $0$ and state $N$. Therefore, no matter which state the Markov
chain starts from, it converges to state $0$ or state $N$
\cite[p.26]{Lawler06}, i.e., all nodes reach agreement on the
desired model.

\section{Proof of Lemma \ref{lem5}}\label{appA}
Let $C_1$ denote the far-field condition: $\|z^\circ_k-w_{k,i-1}\|
\gg 1$. We obtain from Assumption \ref{ass5} and (\ref{eq75}) that
\begin{align}
    \Pr(\|\hat{\boldsymbol{h}}_{k,i}\| > \eta\;|\; C_1)
    &\overset{(a)}{\geq} \Pr(\|\bar{h}_{k,i}\|-\|\boldsymbol{n}_{k,i}\| > \eta\;|\;C_1)\notag \\
    &= 1- \Pr(\|\boldsymbol{n}_{k,i}\| \geq \|\bar{h}_{k,i}\|-\eta\;|\;C_1)\notag \\
    &\overset{(b)}{\geq} 1- \frac{\mathbb{E}\|\boldsymbol{n}_{k,i}\|^2}
    {(\|\bar{h}_{k,i}\|-\eta)^2}\notag \\
    &\overset{(c)}{=} 1- \frac{\nu[\tau\|\bar{h}_{k,i}\|^2+\sigma^2_{v,k}\text{Tr}(R_u)]}
    {2(\|\bar{h}_{k,i}\|-\eta)^2} \label{eq41}
\end{align}
where step (a) follows from the triangle inequality of norms and
(\ref{eq60}), step (b) is by the Markov inequality (\ref{eq61}) and
Assumption \ref{ass5}, and step (c) is by (\ref{eq40}). Moreover,
under conditions (\ref{eq72}) and $C_1$, we can ignore the term
$\eta$ in the denominator of (\ref{eq41}). In addition, from
condition $C_1$ and (\ref{eq76}), and since the variance
$\sigma^2_{v,k}$ is generally small, we may ignore the term
$\nu\sigma^2_{v,k}\text{Tr}(R_u)$ in (\ref{eq41}) and obtain
(\ref{eq62}). Similar arguments apply to (\ref{eq74}).

\section{Proof of Lemma \ref{lem6}}\label{appB}
Under condition (\ref{eq72}) and from (\ref{eq16}), the probability
$P_d$ in (\ref{eq100}) becomes
\begin{align}
    P_d
    &= \Pr(\|\hat{\boldsymbol{h}}_{k,i}\| > \eta,\;
    \|\hat{\boldsymbol{h}}_{l,i}\| > \eta,\;
    \hat{\boldsymbol{h}}_{k,i}^T\hat{\boldsymbol{h}}_{l,i}>0
    \;|\;z^\circ_k=z^\circ_l)\notag \\
    &\approx \Pr(\hat{\boldsymbol{h}}_{k,i}^T\hat{\boldsymbol{h}}_{l,i}>0
    \;|\;z^\circ_k=z^\circ_l)\label{eq45}
\end{align}
where we used the fact that $\hat{\boldsymbol{h}}_{k,i}$ and
$\hat{\boldsymbol{h}}_{l,i}$ are independent, as well as the result
of Lemma \ref{lem5} which ensures that
$\|\hat{\boldsymbol{h}}_{k,i}\|>\eta$ with high probability
(likewise, for the norm of $\hat{\boldsymbol{h}}_{l,i}$). Note that
the event $\hat{\boldsymbol{h}}_{k,i}^T\hat{\boldsymbol{h}}_{l,i}>0$
is equivalent to the fact that the angle between these two vectors
is less than $\pi/2$. Let $\boldsymbol{\theta}_{k,i}$ denote the
angle between the vectors $\bar{h}_{k,i}$ and
$\hat{\boldsymbol{h}}_{k,i}$ due to the noise $\boldsymbol{n}_{k,i}$
(see Fig. \ref{Fig_10}(a)). The value of $\boldsymbol{\theta}_{k,i}$
is positive if the vector $\hat{\boldsymbol{h}}_{k,i}$ rotates
counter-clockwise relative to $\bar{h}_{k,i}$; otherwise, its value
is negative. Then, we have that the angle
$\boldsymbol{\theta}_{k,i}$ is upper bounded by (see Fig.
\ref{Fig_10}(a)):
\begin{equation}\label{eq42}
    |\boldsymbol{\theta}_{k,i}|\leq
    \sin^{-1}\left(\frac{\|\boldsymbol{n}_{k,i}\|}
    {\|\bar{h}_{k,i}\|}\right)
    \approx
    \frac{\|\boldsymbol{n}_{k,i}\|}
    {\|\bar{h}_{k,i}\|}
\end{equation}
\noindent That is, the maximum value of $\boldsymbol{\theta}_{k,i}$
occurs when the vectors $\hat{\boldsymbol{h}}_{k,i}$ and
$\boldsymbol{n}_{k,i}$ are perpendicular. The approximation in
(\ref{eq42}) is from (\ref{eq72}) and (\ref{eq40}) so that it holds
that
\begin{equation}\label{eq66}
    \frac{\mathbb{E}\|\boldsymbol{n}_{k,i}\|^2}{\|\bar{h}_{k,i}\|^2}
    \leq \frac{\nu[\tau\|\bar{h}_{k,i}\|^2+\sigma^2_{v,k}\text{Tr}(R_u)]}
    {2\|\bar{h}_{k,i}\|^2}
    \approx \frac{\nu \tau}{2}.
\end{equation}
Since all nodes start from the same initial estimate (i.e.,
$w_{k,-1}=0$ for all $k$), the estimates
$\{\boldsymbol{w}_{k,i-1}\}$ are close to each other during the
initial stages of adaptation and it is reasonable to assume that
$\|\boldsymbol{w}_{k,i-1}-\boldsymbol{w}_{l,i-1}\| \ll
\|z^\circ_k-\boldsymbol{w}_{k,i-1}\|$. Therefore, we arrive at the
approximation $\bar{h}_{l,i}\approx \bar{h}_{k,i}$ for computing
$P_d$. This implies that the vectors $\hat{\boldsymbol{h}}_{k,i}$
and $\hat{\boldsymbol{h}}_{l,i}$ can be modeled as starting
approximately at the same location $\boldsymbol{w}_{k,i-1}$ but
having deviated by angles $\boldsymbol{\theta}_{k,i}$ and
$\boldsymbol{\theta}_{l,i}$, respectively (see Fig.
\ref{Fig_10}(b)). Therefore, the angle between
$\hat{\boldsymbol{h}}_{k,i}$ and $\hat{\boldsymbol{h}}_{l,i}$ is
equal to $|\boldsymbol{\theta}_{k,i}-\boldsymbol{\theta}_{l,i}|$.
From (\ref{eq45}), we obtain that
\begin{align}
    P_d &\approx \Pr\left(|\boldsymbol{\theta}_{k,i}-\boldsymbol{\theta}_{l,i}| <
    \frac{\pi}{2}\;|\; z^\circ_k=z^\circ_l\right)\notag \\
    &\overset{(a)}{\geq} \Pr\left(|\boldsymbol{\theta}_{k,i}|+|\boldsymbol{\theta}_{l,i}| <
    \frac{\pi}{2}\;|\; z^\circ_k=z^\circ_l\right)\notag \\
    &\overset{(b)}{\geq} \Pr\left(\frac{\|\boldsymbol{n}_{k,i}\|}{\|\bar{h}_{k,i}\|}
    +\frac{\|\boldsymbol{n}_{l,i}\|}{\|\bar{h}_{k,i}\|} <
    \frac{\pi}{2}\right)\notag \\
    &= 1- \Pr(\|\boldsymbol{n}_{k,i}\|+\|\boldsymbol{n}_{l,i}\| \geq
    \pi\|\bar{h}_{k,i}\|/2) \label{eq65}
\end{align}
where step (a) is by the triangle inequality of norms and
(\ref{eq60}) and step (b) is by (\ref{eq42}). To evaluate the
probability in (\ref{eq65}), we resort to the following fact. For
any two random variables $\boldsymbol{x}$ and $\boldsymbol{y}$ and
for any constant $\eta$, it holds from (\ref{eq60}) that
\begin{align}
    \Pr(\boldsymbol{x}+\boldsymbol{y}>\eta)
    \leq \Pr(\boldsymbol{x} >
    \eta/2)+\Pr(\boldsymbol{y}>\eta/2).\label{eq80}
\end{align}
This leads to
\begin{equation}
    P_d \geq 1-\Pr(\|\boldsymbol{n}_{k,i}\| > \pi\|\bar{h}_{k,i}\|/4)
    -\Pr(\|\boldsymbol{n}_{l,i}\| > \pi\|\bar{h}_{k,i}\|/4).
\end{equation}
We then arrive at (\ref{eq68}) because
\begin{align}
    P_d\geq 1- \frac{16(\mathbb{E}\|\boldsymbol{n}_{k,i}\|^2+
    \mathbb{E}\|\boldsymbol{n}_{l,i}\|^2)}{\pi^2\|\bar{h}_{k,i}\|^2}
    \geq 1-\frac{16\nu \tau}{\pi^2}
\end{align}
where we used the Markov inequality (\ref{eq61}) and (\ref{eq66}).
Similar arguments apply to $P_f$ when $z^\circ_k\neq z^\circ_l$ by
noting that the vectors $\bar{h}_{k,i}$ and $\bar{h}_{l,i}$ can
again be modeled as starting approximately at the same location
$w_{k,i-1}$, but pointing towards different directions:
$\bar{h}_{k,i}$ towards $z_k$ and $\bar{h}_{l,i}$ towards $z_l$, and
the angle between these two vectors now assumes a value close to
$\pi$ according to Lemma \ref{lem3}.}

\begin{figure}
\centering
\includegraphics[width=22em]{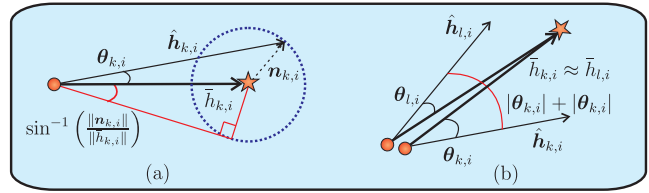}
\caption{Illustration of (a) the angle $\boldsymbol{\theta}_{k,i}$
between $\bar{h}_{k,i}$ and $\hat{\boldsymbol{h}}_{k,i}$ due to the
noise $\boldsymbol{n}_{k,i}$ and (b) the angle between
$\hat{\boldsymbol{h}}_{k,i}$ and $\hat{\boldsymbol{h}}_{l,i}$ when
$z^\circ_k=z^\circ_l$.} \label{Fig_10}
\end{figure}

\bibliographystyle{IEEEtran}
\bibliography{IEEEfull,refs}

\end{document}